\newcommand{\cD}{\mathcal{D}}
\newcommand{\cE}{\mathcal{E}}
\newcommand{\cH}{\mathcal{H}}
\newcommand{\cL}{\mathcal{L}}
\newcommand{\cO}{\mathcal{O}}
\newcommand{\Om}{\Omega}
\newcommand{\ep}{\epsilon}
\newcommand{\x}{\mathbf{x}}
\newcommand{\E}{\mathcal{E}}
\newcommand{\fA}{\mathfrak{A}}
\newcommand{\F}{\mathfrak{F}}
\newcommand{\frakg}{\mathfrak{g}}
\newcommand{\Gcal}{\mathcal{G}}  
\newcommand{\Lcal}{\mathcal {L}}
\newcommand{\Bcal}{\mathcal {B}}
\newcommand{\Pcal}{\mathcal{P}}  
\newcommand{\Hcal}{\mathcal{H}}  
\newcommand{\Ccal}{\mathcal{C}}
\newcommand{\Dcal}{\mathcal{D}}
\newcommand{\Ecal}{\mathcal{E}} 
\newcommand{\Fcal}{\mathcal{F}}
\newcommand{\Mcal}{\mathcal{M}}
\newcommand{\Ocal}{\mathcal{O}}
\newcommand{\Scal}{\mathcal{S}}
\newcommand{\Rcal}{\mathcal{R}}
\newcommand{\Tcal}{\mathcal{T}}
\newcommand{\Vcal}{\mathcal{V}}
\newcommand{\Wcal}{\mathcal{W}}
\newcommand{\Zcal}{\mathcal{Z}}
\newcommand{\Ycal}{\mathcal{Y}}
\newcommand{\Ci}{\mathcal{C}^\infty} 
\newcommand{\sst}[1]{\scriptscriptstyle{#1}}  
\newcommand{\vr}[1]{\boldsymbol{#1}}         
\newcommand{\minus}{\sst{-1}}   
\newcommand{\1}{\mathds{1}}                         
\newcommand{\pa}{\partial}                              
\newcommand{\NN}{\mathbb{N}}          
\newcommand{\RR}{\mathbb{R}}           
\newcommand{\CC}{\mathbb{C}}           
\newcommand{\MM}{\mathbb{M}} 	     
\newcommand{\al}{\alpha}
\newcommand{\bet}{\beta}
\newcommand{\Ga}{\Gamma}
\newcommand{\de}{\delta}
\newcommand{\la}{\lambda}
\newcommand{\La}{\Lambda}
\newcommand{\ph}{\varphi}
\newcommand{\om}{\omega}
\newcommand{\T}{\cdot_{{}^\Tcal}}
\newcommand{\TT}{\Tcal}
\newcommand{\Poi}[2]{\{#1,#2\}}
\newcommand{\WF}{\mathrm{WF}}         
\newcommand{\id}{\mathrm{id}}               
\DeclareMathOperator{\supp}{\mathrm{supp}}      
\newcommand{\pp}{\mathrm{pp}}
\newcommand{\rp}{\mathrm{rp}}   
\newcommand{\MS}{\mathrm{MS}}
\newcommand{\loc}{\mathrm{loc}}
\newcommand{\be}{\begin{equation}}
\newcommand{\ee}{\end{equation}}
\newcommand{\reg}{\mathrm{reg}}
\newcommand{\mc}{\mu\mathrm{c}}
\def\normOrd#1{\mathop{:}\nolimits\!#1\!\mathop{:}\nolimits}
\newcommand{\DIAG}{\mathrm{DIAG}} 	
\newcommand{\sd}{\mathrm{sd}}
\DeclareMathOperator{\cha}{\textrm{char}}
\begin{document}

\title{Perturbative Construction of Models of Algebraic Quantum Field Theory}
\author{\null\\ Klaus Fredenhagen$^{(1)}$, Katarzyna Rejzner$^{(2)}$ \\
  \null\\
  \null\\\small{$^{(1)}$ II Institute for Theoretical Physik, University of Hamburg,}\\
\small{$^{(2)}$ Department of Mathematics, University of York}\\
\small{\texttt{klaus.fredenhagen@desy.de,} \texttt{kasia.rejzner@york.ac.uk }}}
%
%
\maketitle

  \theoremstyle{plain}
  \newtheorem{definition}{Definition}
  \newtheorem{theorem}{Theorem}
  \newtheorem{proposition}{Proposition}
  \newtheorem{cor}{Corollary}
  \newtheorem{lemma}{Lemma}
  \newtheorem{exa}{Example}
  
  \theoremstyle{definition}
  \newtheorem{rem}{Remark}
 \theoremstyle{definition}
  \newtheorem{ass}{\underline{\textit{Assumption}}}

\abstract{We review the construction of models of algebraic quantum field theory by renormalized perturbation theory.}

\section{Introduction}
\label{Introduction}

The axiomatic framework of AQFT allows for a qualitative description of a large class of phenomena occurring in particle physics and some parts of solid state physics. It does not, however, yield quantitative predictions, and there is a widespread impression that one has to abandon the formalism of AQFT if one wants to make real contact with experiments. Actually, up to now no single model of an interacting AQFT in 4d Minkowski space has been constructed.

But what are the alternatives? Standard textbooks on QFT either start from canonical quantization of free field theory on Fock space and try to construct the interacting theory in the interaction picture, or they use the path integral formalism. The canonical approach ends up in the Gell-Mann Low formula for the vacuum expectation values of time ordered products of fields,
\begin{equation}
\om_0(T\ph(x_1)\dots\ph(x_n))=\frac{\langle\Om,T\ph_0(x_1)\dots\ph_0(x_n)e^{\frac{i}{\hbar}\int \mathcal L_I(x)d^4x}\Om\rangle}{\langle\Om,Te^{\frac{i}{\hbar}\int \mathcal L_I(x)d^4x}\Om\rangle}\ ,
\end{equation}
where $\ph_0$ is the free field treated as an operator valued distribution on the Fock space, $\mathcal L_I$ is the interaction density treated as a Wick polynomial of $\ph_0$ and $\Omega$ is the vacuum vector of the free theory. The time ordering symbol $T$ means that the products have to be performed after ordering of the factors according to their time arguments. 

The path integral approach reinterprets the Gell-Mann Low formula as an integral over all classical field configurations $\phi$
\begin{equation}
\om_0(T\ph(x_1)\dots\ph(x_n))=Z^{-1}\int \,\phi(x_1)\dots\phi(x_n)e^{\frac{i}{\hbar}\int \mathcal L(x)d^4x}D\phi 
\end{equation}
 where now $\mathcal L$ is the full classical Lagrangian, $Z$ is a normalization factor, and $D\phi$ is thought of as the Lebesgue integral over field space.
 
 Both versions are only heuristic, and it required the hard and ingenious work of several generations of physicists to turn these formal expressions into unambiguous computations.
 The state of the art is that one can create a formal power series in $\hbar$ where every term is well defined, up to some remaining infrared problems originating from the integral over Minkowski space in the exponent. The great success of QFT relies on the fact that already the first few terms of this series yield a good and often even excellent agreement with experimental data.
 
 The path integral approach has the advantage that it is formally similar to probability theory. Actually, by passing to imaginary time (Wick rotation), one can interpret the vacuum expectation values of time ordered products of fields as correlation functions of a probability distribution (euclidean QFT). In particular, counter-intuitive properties of quantum physics as e.g. entanglement do not occur. Moreover, the momentum space integrals in the evaluation of Feynman diagrams have better convergence properties. Finally, due to the Osterwalder-Schrader theorem, a Wick rotation back to real time is possible under very general conditions.
 
The disadvantage of the path integral approach is that the noncommutative product of operators, which is crucial for the structure of quantum physics, appears only indirectly in terms of different boundary values of analytic functions. In the canonical approach, the operator product is given from the beginning, but there the definition of the time ordered product is problematic.  
First of all, it is not well defined as a product of operators, since, by the existence of a deterministic time evolution, fields at a given time can be expressed in terms of fields at an earlier time, and thus the time ordering prescription is ambiguous. One may instead define time ordered products $TA(t_1)\dots A(t_n)$ of  an operator valued function of time $t\mapsto A(t)$ as a symmetric operator valued function of $n$ time variables such that 
\begin{equation}
TA(t_1)\dots A(t_n)=A(t_1)\dots A(t_n)\  {\rm if }\ t_1\ge\dots\ge t_n\ .
\end{equation}
This, however, does not work since the quantum fields are distributions, and the time ordering prescription would amount to multiply them with a discontinuous function. 

But there is a way out, as first observed by St\"uckelberg, further elaborated by Bogoliubov and collaborators and finally worked out by Epstein and Glaser (causal perturbation theory). Namely, one may define the time ordered product of $n$ fields as an operator valued distribution which is already known for non-coinciding points. Due to the UV divergences of QFT, the extension to coinciding points is ambiguous, but the crucial observation is that this ambiguity is the same ambiguity which occurs in the removal of infinities in approaches where the theory is regularized by the 
introduction of a momentum cutoff, and where the theory without cutoff has to be fixed by renormalization conditions.

Originally, the insertion of a test function $g$ into the interaction Lagrangian, instead of integrating it over all spacetime, was considered to be an intermediate step, and in the last step one aimed at the limit where $g$ tends to 1 (adiabatic limit). In this limit one then finds vacuum expectation values of operator products of time ordered products of interacting fields, and using the Wightman reconstruction theorem, one obtains interacting fields as operator valued distributions on some ``Hilbert space'', of course only in the sense of formal power series. But as first observed in \cite{IS78} and rediscovered in \cite{BF00} the algebra of observables associated to some bounded region can be already constructed if one chooses a test function $g$ which is equal to 1 on some slightly larger region. Actually, the full Haag-Kastler net of the interacting theory can be obtained in this way. Thus causal perturbation theory provides a direct way for a construction of the algebra of observables. Hence by replacing the condition that the local algebras have to be unital C*-algebras by the condition that they are isomorphic to unital *-algebras of formal power series of operators on a dense invariant subspace of some Hilbert space, one obtains a huge class of models, in particular the models used in elementary particle physics.

On this level, structural properties of the local net can be analyzed, but the powerful structural results on C*- and von Neumann algebras are not available. 
Nevertheless, one can derive interesting results, as e.g. the validity of the time-slice axiom \cite{CF08}, the existence of operator expansions \cite{Hollands07} and an algebraic version of the Callan-Symanzik equation \cite{BDF09}. 

In order to reach numerical predictions, one needs in the next step a construction of states. States are here defined as linear maps from the algebra to the formal power series over $\CC$, and the positivity condition on states now means that the expectation value of $A^*A$ is the absolute square of another power series. The construction of states can be done via the adiabatic limit as described above; this is the way the vacuum state is constructed in \cite{EG}. As observed by Steinmann \cite{Steinmann95}, this method does not work for the construction of KMS states. The reason is that the analog of the Gell-Mann Low formula does not hold at nonzero temperature, due to the different asymptotic time behavior of free and interacting systems at nonzero temperature. But here a structural result helps: namely,  the time-slice axiom allows to treat only the theory within a short time interval, and the asymptotic behavior in time does not matter for the existence of states. What matters is the decay of correlations in spacelike directions which is exponentially fast for massive theories.  
 
Up to now we considered the so-called on shell formalism, due to the fact that we constructed the operators on Fock space, thereby imposing the validity of the Klein Gordon equation for the free field. It turned out, however, to be more useful to replace Fock space operators by functionals of classical field configurations which are not restricted to those which satisfy the field equation. On the space of functionals one can then introduce several operations: the pointwise (classical) product, the involution by complex conjugation, the Peierls bracket (as a covariant version of a Poisson bracket on the space of functionals), the non-commutative, associative 
$\star$-product in the sense of deformation quantization and the time ordered product. It is the latter which is relevant for inducing the interaction and which requires renormalization.
The other operations can be directly defined. This is trivial for the pointwise product and for the involution. The Peierls bracket is obtained by considering the linearized Euler-Lagrange operator, which e.g. for the $\ph^4$-theory looks like
\begin{equation}
E'(\ph)=\square +m^2+\frac{\lambda}{2} \ph^2 
\end{equation}
where the last term acts as a multiplication operator. We consider only theories where the linearized Euler-Lagrange operator is normally hyperbolic and hence has unique retarded and advanced propagators $\Delta^{R/A}(\ph)$.
The Peierls bracket is then defined by
\begin{equation}\label{Peierls2}
\{F,G\}(\ph)=\left\langle\frac{\delta F} {\delta \ph}(\ph),\Delta(\ph)\frac{\delta G}{\delta \ph}(\ph)\right\rangle
\end{equation}
where $\Delta=\Delta^R-\Delta^A$ and $\frac{\delta}{\delta \ph}$ is the functional derivative (defined as the directional derivative). In free theories, $E'$ and then also the propagators do not depend on $\ph$. One then can define the $\star$-product (in the sense of formal power series in $\hbar$) by
\begin{equation}
(F\star G)(\ph)=\left.e^{\frac{i\hbar}{2}\left\langle\frac{\delta}{\delta \ph},\Delta \frac{\delta}{\delta\ph'}\right\rangle}F(\ph)G(\ph')\right|_{\ph'=\ph} \ .
\end{equation}
The time-ordered product is defined by a similar formula
\begin{equation}\label{tordered}
(F\T G)=\left.e^{\hbar\left\langle\frac{\delta}{\delta \ph},\Delta^D \frac{\delta}{\delta\ph'}\right\rangle}F(\ph)G(\ph')\right|_{\ph'=\ph}
\end{equation}
with the Dirac propagator $\Delta^D=\frac12(\Delta^R+\Delta^A)$.

Note that there is a crucial difference between the time ordered product and the other products. Namely, the ideal generated by the field equation with respect to the pointwise product is also an ideal with respect to the Poisson bracket and the $\star$-product, but not with respect to the time ordered product. This is actually a necessary condition which allows to use the time ordered product for introducing an interaction. Let $V$ be the interaction. We then define  the interacting observables by
\begin{equation}
R_V(F)=(e_{\Tcal}^V)^{\star -1}\star (e_{\Tcal}^V\T F)\ .
\end{equation}
Here $e_{\Tcal}$ means the exponential series where powers are computed via the time ordered product. For an evaluation functional $\Phi_x(\ph)=\ph(x)$, the corresponding interacting field $x\to R_V(\Phi_x)$ satisfies the equation
\begin{equation}
E'R_V(\Phi_x)=E'\Phi_x+R_V\left(\frac{\delta V}{\delta\ph(x)}\right)
\end{equation}
which may be interpreted as the field equation with interaction $V$, when evaluated on some $\ph$ which satisfies the free field equation. 

The rough description of the formalism has to be made precise in the following sense: One has to specify the functionals which are allowed, and one has to check whether this class contains the relevant ones.
As it stands we need functionals whose functional derivatives are test functions in order that all operations are well defined. But
we will see that by changing the products to equivalent ones which corresponds to Wick ordering in the Fock space framework one can extend the Peierls bracket and the $\star$-product to a rather large class of functionals, which contains in particular the local functionals that appear as terms in the Lagrangian, and is stable under these operations. The definition of time ordered products is more involved and there are different possibilities, corresponding to the choice of renormalization conditions.

The plan of the paper is as follows: we will first outline the functional analytic tools which are needed for the operations. We then define the Peierls bracket and the $\star$-product on a class of functionals called microcausal.
Thereafter we come to the problem to define the time ordered products. Here we first develop the general formalism and show that it leads to a construction of local nets. We review some structural properties of these nets, in particular their behavior under renormalization group transformations. Finally, we outline a possible construction of states. 

\section{Functional derivatives, wave front sets, and all that}
\label{sec:Functional derivatives}

Our approach to quantum field theory is based, in this respect similar to the path integral approach, on functionals of classical field configurations. But there, at least in its euclidean version, the measure theoretic aspects of the space of field configurations are of central importance; in our case, due to the frequent use of functional derivatives, the properties of the space of field configurations as a differential manifold are crucial. 

For definiteness we concentrate on the case of a scalar field and fix an oriented, time-oriented globally hyperbolic spacetime $M$. There we consider the space of real valued smooth functions as the space of field configurations,
\begin{equation}
\cE=\mathcal C^{\infty}(M,\RR)\ .
\end{equation}
We model it as a differentiable manifold over the space of compactly supported smooth functions
\begin{equation}
\cD=\mathcal C^{\infty}_c(M,\RR)\ 
\end{equation}
where charts are defined as maps
\begin{equation}
\ph+\cD\to \cD,\ \ph+\vec{\ph}\mapsto \vec{\ph}\ , {\rm with }\ \ph\in\cE\ .
\end{equation}
Clearly, $\cE$ has the structure of an affine manifold. A similar affine structure can be introduced also for other fields, as e.g. gauge theories or gravity.

We will model observables as functionals on $\Ecal$ and we allow only functionals which depend on the field configuration inside some compact region. This includes in particular the polynomial functionals
\begin{equation}
F(\ph)=\sum_{k=1}^n\int \ph(x_1)\dots\ph(x_k)f_k(x_1,\dots,x_k)
\end{equation}
with symmetrical distributional densities $f_k$ with compact support. More generally, we consider functionals, for which all functional derivatives $ F^{(n)}$ exist and are continuous. We recall after \cite{Ham} (see \cite{Neeb} for a review) that a functional derivative of a functional is defined as
\begin{equation}
\langle F^{(1)}(\ph),\vec{\ph}\rangle:=\frac{d}{d\lambda}F(\ph+\lambda{\vec{\ph}})\big|_{\la=0}
\end{equation}
and a functional is differentiable if the derivative exists for all $\ph\in\cE$. It is continuously differentiable if  the map $\cE\times\cD\to \CC, (\ph,\vec{\ph})\mapsto \langle F^{(1)}(\ph),\vec{\ph}\rangle$ is continuous. If $\Ecal$ is taken with its natural Fr\'echet topology, this implies that $F^{(1)}(\ph)$ is a compactly supported distributional density.
Higher derivatives are obtained by iterating this definition, i.e.
\be
F^{(n)} (\ph)(\vec{\ph}_1 , \ldots , \vec{\ph}_n ) := \frac{d}{d\la} F^{(n-1)} (\ph + \la\vec{\ph}_n )(\vec{\ph}_1 , \ldots, \vec{\ph}_{n-1} ) 
\big|_{\la=0}\ ,
 \ee
  and we find that the $F^{(n)}(\ph)$'s are symmetric compactly supported distributional densities with support contained in $K^n$, for some compact set $K\subset M$. There remains, however, the problem that the propagators have singularities, and therefore the contractions with the distributional densities occurring as functional derivatives are not always well defined. The restriction to functionals whose functional derivatives are smooth densities, on the other side, would exclude almost all local functionals, i.e. functionals of the form
\begin{equation}
F(\ph)=\int f(j_x(\ph))\ ,
\end{equation}
where $j_x(\ph)=(x,\ph(x),\partial\ph(x),\dots)$ is the jet prolongation of $\ph$ and $f$ is a density-valued function on the jet bundle.
For these functionals, the derivatives are supported on the thin diagonal
\begin{equation}
D_n=\{(x_1,\dots,x_n)\in M^n, x_1=\dots=x_n\}
\end{equation}
and thus smooth for $n>1$ only when they vanish. 

The singularities of distributions can be analyzed using the concept of the wave front set. On Minkowski spacetime, this concept arises in the study of the decay properties of the Fourier transform of the given distribution multiplied by some test function. A pair of a spacetime point $x$ and a nonzero momentum $k$ is an element of the wave front set of a given distribution $t$ if, for any test function $f$ with $f(x)\neq 0$ and some open cone around $k$, the Fourier transform of $f t$ does not decay fast (i.e. faster than any power) inside the cone. The notion of the WF set can be generalized to an arbitrary smooth manifold $M$, and it is defined as a subset of $T^*M$.

As the first example we will consider the Dirac $\delta$ distribution. Since $\left<f\delta,e^{ik\bullet}\right>=f(0)$, where $e^{ik\bullet}(x)=e^{ikx}$, if follows that for any choice of  $f$ with $f(x)\neq 0$ the Fourier transform of $f t$ does not decay fast in any direction and hence the wave front set of $\delta$ is
\begin{equation}
\mathrm{WF}(\delta)=\{(0,k),k\neq 0\}\ .
\end{equation}
Another important example is the distribution $f\mapsto \lim_{\epsilon\downarrow0}\int \frac{f(x)}{x+i\epsilon} dx$. Note that its Fourier transform is
\be
\lim_{\epsilon\downarrow0}\int  \frac{f(x)}{x+i\epsilon}e^{ikx} dx=-i\int_k^{\infty}\hat{f}(k') dk'\ .
\ee
and $\int_k^{\infty}\hat{f}(k')dk'$ decays strongly as $k\to\infty$, while for $k\to -\infty$ we obtain
\be
\lim_{k\to-\infty}\int_k^{\infty}\hat{f}(k')dk'=2\pi f(0)\,.
\ee  
We can now conclude that
\be
\mathrm{WF}(\lim_{\epsilon\downarrow0}(x+i\epsilon)^{-1})=\{(0,k),k<0\}\ .
\ee

For more information on wave front sets see  \cite{Hoer} or chapter 4 of  \cite{BaeF}. Using WF sets we can formulate a sufficient condition for a pointwise product of distributions to be well defined. Let $t$ and $s$ be distributions on $M$. The Whitney sum (i.e pointwise sum) of their wave front sets is defined by
\be
\mathrm{WF}(t)+\mathrm{WF}(s)=\{(x,k+k')|(x,k)\in\mathrm{WF}(t),(x,k')\in\mathrm{WF}(s)\}
\ee
If this set does not intersect the zero section of $T^*M$, then we can define the pointwise product $ts$ as
\be
\langle ts,fg\rangle=\frac{1}{(2\pi)^n}\int\, \widehat{tf}(k)\widehat{sg}(-k)dk\,,
\ee 
where $f,g\in\Dcal$ are chosen with sufficiently small support. To see that the integral above converges, note that if $k\not=0$, then either $\widehat{tf}$ is fast-decaying in a conical neighborhood around $k$ or 
$\widehat{sg}$  is fast-decaying in a conical neighborhood around $-k$, while the other factor is polynomially bounded.

Beside the criterion for multiplying distributions, WF sets provide also a characterization of the propagation of singularities. Let $P$ be a partial differential operator and $\sigma_P$ its principal symbol. We can interpret $\sigma_P$ as a function on the cotangent bundle $T^*M$, which carries a structure of a symplectic manifold. With the use of the canonical symplectic form, 1-forms on $T^*M$ can be canonically identified with vector fields. Let $X_P$ be the vector field (called  the Hamiltonian vector field) corresponding to the 1-form $d\sigma_P$. In coordinates it is given by
\[
X_P=\sum_{i=1}^n\frac{\partial \sigma_P}{\partial {k}_j}\frac{\partial}{\partial x_j}-\frac{\partial \sigma_P}{\partial x_j}\frac{\partial}{\partial {k}_j} \ .
\]
Let $(x_j(t),{k}_j(t))$ be a curve that fulfills the system of equations (Hamilton's equations):
\begin{align*}
\frac{dx_j}{dt}&=\frac{\partial \sigma_P}{\partial {k}_j}\,,\\
\frac{d{k}_j}{dt}&=-\frac{\partial \sigma_P}{\partial x_j}\,.
\end{align*}
We call a solution $(x_j(t),{k}_j(t))$ of the above equations an \textit{integral curve} of $X_P$ and the bicharacteristic flow is defined as the set of all such solutions. Along this flow
 $\frac{d\sigma_P}{dt}=X_P(\sigma_P)=0$, so $\sigma_P$ is conserved under the bicharacteristic flow. We are now ready to state \textit{the theorem on the propagation of singularities}: the  wave front set of a solution $u$ of the equation $Pu=f$ with $f$ smooth is a union of orbits of the Hamiltonian flow $X_P$ on the characteristics $\cha P=\{(x,k)\in T^*M|\sigma(P)(x,k)=0\}$ of $P$. 

For hyperbolic differential operators on globally  hyperbolic spacetimes (for example $\MM$), the set of characteristics is the light cone, and the principal symbol is the metric on the cotangent bundle. For such operators the wave front set of solutions is therefore a union of null geodesics $\gamma$  together with their cotangent vectors $k=g(\dot{\gamma},\cdot)$.
\section{The Peierls bracket and the $\star$-product}
As outlined in the Introduction, we start our construction of a pAQFT model from the classical theory. To this end, we equip the space of functionals on the configuration space with a Poisson structure provided with the so called Peierls bracket. This bracket, introduced in \cite{Pei}, is the off-shell extension of the canonical bracket of classical mechanics, which is defined only on the space $\E_S$ of solutions to the equations of motion. To see how this works, we will start in a setting which resembles closely classical mechanics and then show the relation with the Peierls method on a concrete example. 
\subsection{Canonical formalism and the approach of Peierls}\label{canandP}
Let us start with the free scalar field with the field equation 
\be\label{KGeq}
P\ph=0\,,
\ee 
where $P=\Box+m^2$ is the Klein-Gordon operator. For this equation the retarded and advanced Green's functions exist. We also know that
for every $f\in\mathcal{E}$ whose support is past and future compact, $\Delta f$ is a solution to \eqref{KGeq}. 
Conversely, every smooth solution of the Klein Gordon equation 
is of the form $\Delta f$ for some $f\in \mathcal{E}(M)$ with future and past compact support. 

Without loss of generality, the spacetime can be assumed to be of the form $M=\RR\times\Sigma$ with Cauchy surfaces $\{t\}\times\Sigma$, $t\in\RR$.
The space of Cauchy data $\Sigma\ni\vr{x}\mapsto(\ph(t,\vr{x}),\dot{\ph}(t,\vr{x}))$ on the surface $\{t\}\times\Sigma$ is
\[
\mathcal{C}=\{ (\phi, \psi) \in \Ecal(\Sigma) \times \Ecal(\Sigma) \}\,,
\]
where $\Ecal(\Sigma)\doteq\Ci(\Sigma,\RR)$.
This space is isomorphic to $\E_{S}$, the of smooth solutions to \eqref{KGeq}.

%
As in classical mechanics, equations of motion can be derived from the least action principle. Elements of $\mathcal{C}$ play the role of generalized coordinates and generalized velocities, while a smooth trajectory $t\mapsto \phi(t)$, $t\in\RR$ is a function which assigns to an instant of time $t$ a function 
 $\phi(t)\in \Ecal(\Sigma)$ such that trajectories $\phi$ are in one to one correspondence with field configurations $\ph:(t,\vr{x})\to\phi(t)(\vr{x})$, i.e. elements of $\E$.

The Lagrangian $L$ associates to every compact region $K\subset\Sigma$ a functional $L_K$ on $\mathcal{C}$, typically given in terms of a Lagrangian density $\Lcal$,
\[L_K(\phi,\psi)=\int_K\Lcal(\phi(\vr{x}),\nabla\phi(\vr{x}),\psi(\vr{x}))d\sigma(\vr{x})\ ,\] 
 and the action is, for every compact $K\subset \Sigma$ and every finite time interval $I$, a function on the space of trajectories defined by
\be\label{action}
S_{I\times K}(\phi)=\int_I L_K(\phi(t),\dot{\phi}(t))dt=
\int_I\left(
\int_K \mathcal{L}(\ph(t,\vr{x}),\nabla_{\vr{x}}\ph(t,\vr{x}),\dot{\ph}(t,\vr{x}))d\sigma_t(\vr{x})\right)dt\ .
\ee
Solutions are configurations for which, for all compact $K$ and $I$, $S_{I\times K}$ is stationary under variations $\delta \phi$ with support in the interior of $I\times K$.
If e.g. $\Lcal$ is the Lagrangian density of the free scalar field, then the least action principle yields \eqref{KGeq} as the equation of motion. 

Now let $F,G$ be two functions on the space of trajectories which depend only on the restriction of the trajectory to  $[t_1,t_2]\times K$ for some compact $K\subset\Sigma$ and $t_1<t_2$.
Let $\E_S$ be the space of solutions for an action $S$, and let $r_{\lambda G}:\E_S\to\E_{S+\lambda G}$ be the map which associates to a solution for $S$ a solution for $S+\lambda G$ such that both solutions coincide for $t<t_1$ ($r_{\lambda G}$ is called the retarded M\o ller map).
Following the idea of Peierls, we consider the change of $F$ under the change of the action and set, for a solution $\phi\in\E_S$, 
\[ \mathrm{D}_GF(\phi)=\frac{d}{d\lambda}|_{\lambda=0}F(r_{\lambda G}(\phi))\ .\]
 Similarly, we introduce the advanced M\o ller map $a_{\lambda F}:\E_S\to\E_{S+\lambda F}$ where the solutions coincide for $t>t_2$, and set
  \[\reflectbox{D}_FG(\phi)=\frac{d}{d\lambda}|_{\lambda=0}G(a_{\lambda F}(\phi))\ .\]
  The Peierls bracket of $G$ and $F$ is now defined by
\be\label{Peierls1}
\{G,F\}_{\textrm{Pei}}\doteq  \mathrm{D}_GF-\reflectbox{D}_FG\,.
\ee
The advantage of the Peierls bracket is the fact that it is defined covariantly, directly in the Lagrangian formalism. As it stands, the Peierls bracket of two functionals is only defined on solutions, and one has to prove that it depends only on the restriction of the functionals to the space of solutions. In order to show that it satisfies the Jacobi identity, one has to extend it to a neighborhood of the space of solutions. It is, however, possible and also convenient to extend it to a Poisson bracket on functions of arbitrary configurations $\E$ (not only of solutions $\E_S$). We will now derive another formula for the Peierls bracket (formula \eqref{Peierls2} from the Introduction), which makes use of retarded and advanced Green's functions for normally hyperbolic operators. Next we will show that, restricted to the solution space, \eqref{Peierls2} is equivalent to the canonical bracket. 
\subsection{The generalized Lagrangian formalism}
Before we continue, there is one small modification to the classical Lagrangian formalism, which we have to perform in order to make the quantization simpler. In formula \eqref{action}, we have smeared the Lagrangian density $\Lcal(\vr{x},t)$ with a characteristic function of a certain compact region. Such sharp cut-offs would introduce additional divergences in the quantum theory, which we wish to avoid. Therefore, we replace the characteristic function by a smooth function that is equal to 1 on a sufficiently large region. Actually, it is convenient to consider \textit{all} possible cutoffs and define the \textit{generalized Lagrangian} as a map $L$ from $\Dcal$ to the space $\Fcal_{\textrm{loc}}$ of local functionals on $\E$. We require that 
\[
L(f+g+h)=L(f+g)-L(g)+L(g+h)\,,
\]
for $f,g,h\in\Dcal$ and $\supp\,f\cap\supp\,h=\varnothing$. We also want
\[
\supp(L(f))\subseteq \supp(f)\,,
\]
where the support of a smooth functional $F\in\Ci(\E,\CC)$ is defined as
\begin{align}\label{support}
\supp F\doteq\{ & x\in M|\forall \text{ neighborhoods }U\text{ of }x\ \exists \ph,\psi\in\Ecal, \supp\,\psi\subset U\,,
\\ & \text{ such that }F(\ph+\psi)\not= F(\ph)\}\ .\nonumber
\end{align}
The action is an equivalence class of Lagrangians, where $L_1\sim L_2$  if
\be\label{equ}
\supp (L_{1}-L_{2})(f)\subset\supp\, df\,.
\ee
The Euler-Lagrange derivative is a map $S':\E\to\Dcal'$ defined as
\be\label{ELd}
\left<S'(\ph),h\right>=\left<L(f)^{(1)}(\ph),h\right>\,,
 \ee
 with $f\equiv 1$ on $\supp h$. Note that $S'\in\Gamma(T^*\Ecal)$. The field equation is now the condition that
\be
 S'(\ph)=0\label{eom}\,,
\ee
which coincides with the condition obtained from the variation of \eqref{action}. We model observables as multilocal functionals on $\E$ (i.e. products of local functionals). The maps $F$, $G$ considered in the previous section are examples of such functionals. The space of multilocal functionals on the space of solutions to \eqref{eom} is given by the quotient $\Fcal/\Fcal_0$, where $\Fcal_0$ denotes the space of multilocal functionals that vanish on $\E_S$. 

The second variational derivative of the action is defined by
\[
\left<S''(\ph),h_1\otimes h_2\right>\doteq \left<L^{(2)}(f)(\ph),h_1\otimes h_2\right>\,,
\]
where $f\equiv 1$ on $\supp\,h_1$ and $\supp\,h_2$. $S''$ defined in such a way is a symmetric two tensor on the affine manifold $\E$ (equipped with the smooth structure induced by $\tau_W$) and for each $\ph$ it induces an operator from $\Dcal$ to $\Dcal'$. Moreover, since $L(f)$ is local, the second derivative has support on the diagonal, so $S''(\ph)$ 
can be evaluated on smooth functions $h_1$, $h_2$, where only one of them is required to be compactly supported. This way we obtain an operator (the so called linearized Euler-Lagrange operator) $E'[S](\ph):\E\rightarrow \Dcal'$.

We want to show now that the original formula of Peierls \eqref{Peierls1} is equivalent to \eqref{Peierls2}, if $E'[S](\ph)$ is a normally hyperbolic operator.  Let $G\in\mathcal{F}_\loc$ be a local functional. We are interested in the flow $(\Phi_{\lambda})$ on $\E$ which deforms solutions of the original field equation $S'(\ph)=0$ to those of the perturbed equation $S'(\ph)+\lambda G^{(1)}(\ph)=0$. Let $\Phi_0(\ph)=\ph$ and 
\be\label{flow}
\frac{d}{d\lambda}\left.\left(S_\Mcal'(\Phi_{\lambda}(\ph))+G^{(1)}(\Phi_{\lambda}(\ph))\right)\right|_{\lambda=0}=0 \ .
\ee
The vector field $\ph\mapsto X(\ph)=\frac{d}{d\lambda}\Phi_{\lambda}(\ph)|_{\lambda=0}$ satisfies the equation
\be\label{X:eq}
\left<E'[S](\ph),X(\ph)\right>+G^{(1)}(\ph)=0\,.
\ee
Let $\Delta^{R/A}_S(\ph)$ be the retarded/advanced Green's function of the normally hyperbolic operator $E'[S](\ph)$ and let $\Delta_S(\ph)=\Delta^R_S(\ph)-\Delta^A_S(\ph)$ be the causal propagator. We obtain now two distinguished solutions to the equation \eqref{X:eq},
\be
X^{R/A}(\ph)=\left<\Delta_S^{R/A}(\ph),G^{(1)}(\ph)\right>\ .
\ee
Note that $X^R(\ph)=(\mathrm{D}_G\Phi)(\ph)$, where $\Phi$ is the evaluation functional $\Phi_x(\ph)\doteq \ph(x)$. The difference $X=X^R-X^A$ defines a vector field $X\in\Gamma(T\E(M))$ and it follows that
\[
\{G,F\}_{\textrm{Pei}}(\ph)\doteq  \mathrm{D}_GF(\ph)-\reflectbox{D}_FG(\ph)=\left<F^{(1)}(\ph),\Delta_S^{R/A}(\ph)G^{(1)}(\ph)\right>\,.
\]

Now we prove the equivalence between \eqref{Peierls2} and the canonical bracket. We fix a Cauchy surface $\{t\}\times\Sigma$. Note that, given Cauchy data $(\phi,\psi)\in\Ccal$, we can write the unique solution $\ph$ corresponding to these Cauchy data as
\be\label{Cauchy}
\ph(x)=\beta(\phi,\psi)(x)\equiv\int_{\Sigma}\left(\Delta_S(x;t,\vr{y})\psi(\vr{y})-\frac{\partial}{\partial t}{\Delta}_S(x;t,\vr{y})\phi(\vr{y})\right)d\sigma_t(\vr{y})\,.
\ee
Canonical momenta are obtained as distributional densities by
\[
\langle\pi(\phi,\psi),h\rangle\doteq \frac{d}{d\lambda}|_{\lambda=0}L_K(\phi,\psi+\lambda h)\, ,\ h\in\Dcal(K)\ .
\]
We assume that for the Lagrangians of interest $\pi$ is always smooth.
The phase space is then
\be\label{phase}
\Pcal=\Ecal(\Sigma)\times \Ecal_d(\Sigma)\,,
\ee
where $\Ecal_d(\Sigma)$ is the space of smooth densities. The tangent space $T_{(\phi,\psi)}\Pcal$ of $\Pcal$ at some point $(\phi,\psi)$ consists of the compactly supported elements $(f,g)\in\Pcal$. 
The phase space has the canonical symplectic form
\[
\sigma_{(\phi,\psi)}((f_1, f_2), (g_1, g_2)) = \int_\Sigma (f_1 g_2 - f_2 g_1).
\]
Note that  $\Ecal(\Sigma) \times \Ecal_d(\Sigma)\subset \Ecal(\Sigma) \times \Dcal'(\Sigma)\cong T^*(\Ecal(\Sigma) )$, so $(\Pcal,\sigma)$ is indeed the analog of the phase space in classical mechanics. 

For simplicity we consider an action $S$ induced by a Lagrangian $L$ which depends on $\dot{\phi}$ only through the kinetic term $\frac{1}{2}\dot{\phi}^2$, hence $\pi(\vr{y})\doteq\dot{\phi}(\vr{y})d\sigma_t(\vr{y})$. Let $\al:(\phi,\pi)\mapsto(\phi,\dot{\phi})$ and $\tilde{\beta}\doteq \beta\circ\al:\Pcal\rightarrow \Ecal_S$. We can now prove the equivalence of the canonical and the Peierls bracket. Let $F,G\in\Fcal$. Using \eqref{Cauchy} we obtain
\begin{multline*}
\{F\circ\tilde{\beta},G\circ\tilde{\bet}\}_{\textrm{can}}=\int_{\Sigma}\left(\left\langle\tfrac{\delta F}{\delta\ph}\circ\tilde{\bet},\tfrac{\delta \tilde{\bet}}{\delta \phi(\vr{x})}\right\rangle\left\langle\tfrac{\delta G}{\delta\ph}\circ\tilde{\bet},\tfrac{\delta \tilde{\bet}}{\delta \pi(\vr{x})}\right\rangle-\left\langle\tfrac{\delta F}{\delta\ph}\circ\tilde{\bet},\tfrac{\delta \tilde{\beta}}{\delta \pi(\vr{x})}\right\rangle\left\langle\tfrac{\delta G}{\delta\ph}\circ\tilde{\bet},\tfrac{\delta \tilde{\bet}}{\delta \phi(\vr{x})}\right\rangle\right)=\\
\left<\Theta,F^{(1)}\circ\tilde{\bet}\otimes G^{(1)}\circ\tilde{\bet}\right>\,,
\end{multline*}
where $\Theta$ is given by
\[
\Theta(z',z)=\int_{\Sigma}\left(\dot{\Delta_S}(z';t,\vr{x}){\Delta_S}(z;t,\vr{x})-\dot{\Delta_S}(z;t,\vr{x}){\Delta_S}(z';t,\vr{x})\right) d\sigma(\vr{x})\,.
\]
From general properties of the causal propagator $\Delta_S$ (the generalization of \eqref{Cauchy} to distributional Cauchy data) it follows that the convolution $\Theta$ above is equal to $\Delta_S$. Hence, on the solution space $\Ecal_S$, 
\[
\{F\circ\tilde{\beta},G\circ\tilde{\beta}\}_{\textrm{can}}=\{F,G\}_{\textrm{Pei}}\circ\tilde{\beta}\,.
\]
\subsection{Example: the Poisson bracket of the $\ph^4$ interaction}
In this section, following \cite{FR15}, we give another argument for the equivalence of the Peierls and the canonical bracket on the example of the $\ph^4$ interaction. Consider the generalized Lagrangian
\[
L(f)(\ph)=\int\limits_{\MM} \left(\frac{1}{2}\nabla_\mu\ph\nabla^\mu\ph-\frac{m^2}{2}\ph^2-\frac{\la}{4!}\ph^4\right)fd\mu\,,
\]
where $d\mu(x)$ is the invariant measure $\mu$ on $M$, induced by the metric. Then $S'(\ph)=-\left((\square+m^2)\ph+\frac{\lambda}{3!}\ph^3\right)$ and $E'[S](\ph)$ is the linear operator
\be
-\left(\square +m^2+\frac{\lambda}{2}\ph^2\right)
\ee
(the last term is to be understood as a multiplication operator). The Peierls bracket is
\be
\Poi{\Phi_x}{\Phi_y}_{\textrm{Pei}}=\Delta_S(\Phi)(x,y)\,,
\ee
where $\Phi_x$, $\Phi_y$ are evaluation functionals on $\E$ 
and   $x\mapsto \Delta_S(\ph)(x,y)$ is a solution (at $\ph$) of the  linearized equation of motion with the initial conditions
\be
\Delta_S(\ph)(y^0,\mathbf{x};y^0,\mathbf{y})=0\ ,\ \frac{\pa}{\pa x^0}\Delta_S(\ph)(y^0,\mathbf{x};y)=\delta(\mathbf{x},\mathbf{y})\ .
\ee
This coincides with the Poisson bracket in the canonical formalism. Namely, let $\ph\in\E_S$,
then
\be
0=\left\{(\square +m^2)\Phi_x+\frac{\lambda}{3!}\Phi_x^3,\Phi_y\right\}_
{\textrm{can}}=\left(\square +m^2+\frac{\lambda}{2}\Phi_x^2\right)\left\{\Phi_x,\Phi_y\right\}_{\textrm{can}}\,.
\ee
In the first step we used the fact that $\ph$ is a solution of the equations of motion and in the second step we used the fact that the canonical bracket is a derivation in both arguments. We can see from the equation above that the canonical Poisson bracket satisfies the linearized field equation with the same initial conditions as the Peierls bracket. The uniqueness of solutions to these linearized equations implies that in fact $\{\Phi_x,\Phi_y\}_{\textrm{can}}=\Poi{\Phi_x}{\Phi_y}_{\textrm{Pei}}$, on $\Ecal_S$. This clearly extends to general functionals since $\Poi{F}{G}_{\textrm{Pei}}=\left<\Poi{\Phi_x}{\Phi_y}_{\textrm{Pei}},F^{(1)}\otimes G^{(1)}\right>$, and similarly for $\{.,.\}_{\textrm{can}}$.
\subsection{Geometrical structures in classical theory}\label{geom}
The Peierls bracket, from now on denoted by $\{.,.\}_S$, introduces a symplectic structure on the space $\Fcal/\Fcal_0$ of on-shell multilocal functionals on $\Ecal_S$. We can find a nice geometrical interpretation for this space using some basic notions of symplectic geometry. Let us assume that $S$ is quadratic, so the equations of motion are of the form $S'(\ph)=P\ph=0$ for some normally hyperbolic differential operator $P$. The  space of solutions $\Ecal_S$ is a vector space and hence an infinite dimensional manifold\footnote{There is another natural way to introduce a smooth manifold structure on $\Ecal_S$. We define the atlas where charts are given by maps
$\ph+\cD\to \mathcal{E}_{S,sc}$, $\ph+\vec{\ph}\mapsto \Delta_S\vec{\ph}$, with $\ph\in\cE_S$, where $\mathcal{E}_{S,sc}$ is the space of solutions with compactly supported Cauchy data. We have $\Delta_S:\Dcal\rightarrow \mathcal{E}_{S,sc}$ and we equip 
 $\mathcal{E}_{S,sc}$ with the final topology with respect to all curves of the form $\la\mapsto \ph+ \Delta_S(\vec{\ph}(\la))$, where $\la\mapsto \vec{\ph}(\la)$ is a smooth curve in $\Dcal$. This gives $\Ecal_{S}$ the structure of an affine manifold in the sense of convenient calculus \cite{Michor}} with a tangent space $T\Ecal_S=\Ecal_S\times \Ecal_S$ and cotangent space  $T^*\Ecal_S=\Ecal_S\times \Ecal_S'$. If $F,G$ are multilocal functionals on $\Ecal_S$, then their first functional derivatives are smooth, so 
 \[
 F^{(1)}(\ph),G^{(1)}(\ph)\in \Ecal_S'\cap(\Dcal/\{u\in\Dcal|\left<u,\ph\right>=0\,,\forall\ph\in\Ecal_S \})\]
 for all $\ph\in\Ecal_S$, i.e. $\ph\in\ker P$. Since $P$ is a normally hyperbolic operator, one can show (see for example \cite{FR15} for the proof in a more general setting) that  $\{u\in\Dcal|\left<u,\ph\right>=0\,,\forall\ph\in\Ecal_S \}\cong P\Dcal$, so functional derivatives of multilocal functionals are one forms in  $\Gamma(\Ecal_S\times \Dcal/P\Dcal)\subset \Gamma(T^*\Ecal_S)$.
 
The causal propagator $\Delta_S$ induces a Poisson structure on $\Fcal$, which is also well defined on the quotient $\Fcal/\Fcal_0$, as $\Fcal_0$ is a Poisson ideal with respect to this structure. We can also use $\Delta_S$ to map one-forms in $\Gamma(\Ecal_S\times \Dcal/P\Dcal)\subset \Gamma(T^*\Ecal_S)$ to one-vectors in $\Gamma(\Ecal_S\times\Ecal_{S,\textrm{sc}})\subset\Gamma(T\Ecal_S)$, where ``$\textrm{sc}$'' indicates spacelike-compact support. To see how it works, note that $\Delta_S$ induces an operator $\Dcal\rightarrow \Ecal$ and $\ker\Delta_S=P \Dcal$, so $\Delta_S$ is well defined on equivalence classes in $\Dcal/P\Dcal$. To show that $\Delta_S$ is invertible on this space, it remains to show that it is surjective. We recall here the standard argument, which can also be found in \cite{FR15}.  Let $f$ be a solution with a spacelike-compact support, $\chi \in \Ecal$, and $\Sigma_1, \Sigma_2$ be Cauchy surfaces such that $\Sigma_1 \cap J_+(\Sigma_2)=\varnothing$. Assume $\chi(x)=0$ for $x \in J_-(\Sigma_1)$ and $\chi(x)=1$ for $x \in J_+(\Sigma_2)$. Then $P \chi f=0$ outside of the time slice bounded by $\Sigma_1$ and $ \Sigma_2$ ($\chi =$const. there) which implies that $P\chi f$ has compact support. Hence,
\[
 \Delta_S P \chi f =  \Delta_S^R P \chi f +  \Delta_S^A P(1-\chi)f = f.
\]

We can now assign to a form $F^{(1)}$, the vector $\left<\Delta_S F^{(1)},.\right>$. On $\Ecal_{S,\textrm{sc}}$ we have the natural symplectic structure $\sigma_1$:
\[
\sigma_1(f,g)=\int_\Sigma(f\wedge *dg-*df\wedge g) = \int_\Sigma  (f (\partial_n g) - (\partial_n f)g)dvol_\Sigma,
\]
where $\partial_n$ is the normal derivative on $\Sigma$ ($\partial_n f = n^{\mu} \partial_{\mu}f$, $n^{\mu} \xi_{\mu}=0 $ for $\xi \in T\Sigma$, $n^{\mu}n_{\mu}=1$.). Obviously, $\sigma_1$ extends to a constant 2-form on $\Ecal_S$. The relation between the 2-form $\sigma_1$ and the ``bi-vector field''\footnote{Since $\Delta_S$ is a bi-distribution rather than a smooth function, the map  $\ph\mapsto \Delta_S$ doesn't induce an actual bi-vector field on $\Ecal$, but belongs to a suitable completion of $\Gamma(\La^2T\Ecal)$.} $\Delta_S$ is given by (see for example \cite{Weise} for a proof based on the ideas of \cite{CrWitten})
\[
\sigma_1(\Delta_S F^{(1)},\xi)=\left< F^{(1)},\xi\right>\,,
\]
where $\xi\in\Gamma(\Ecal_S\times\Ecal_{S,\textrm{sc}})$. In this sense we can think of $\Delta_S$ as the ``inverse'' of the symplectic structure $\sigma_1$. Setting $\xi=\Delta_S G^{(1)}$ for $G\in\Fcal$, we obtain
\[
\sigma_1(\Delta_S F^{(1)},\Delta_S G^{(1)})=\left< F^{(1)},\Delta_S G^{(1)}\right>\,.
\]

If $S$ is not quadratic, the situation is more complicated, since $S'$ induces non-linear equations of motion. It turns out that for many classes of physically interesting systems solutions of $S'(\ph)=0$ develop singularities after a finite time, despite  starting from smooth Cauchy data. Therefore the space of globally smooth solutions $\Ecal_S$ might be very small and it does not necessarily capture all the interesting features of the theory. Moreover, in general it is not clear if $\Ecal_S$ can be equipped with a manifold structure in the sense of infinite dimensional differential geometry (see for example \cite{AMM} for the results on the space of solutions of Einstein's equations). A more general structure like a stratified space might be necessary.

For non-linear equations of motion it is therefore more convenient to replace the space of functionals on the space of solutions with the quotient $\Fcal_S:=\Fcal/\{\left<S',X\right>,X\in \Vcal\}$, where $\Vcal\subset\Gamma(T\Ecal)$ is the space of vector fields that are derivations of $\Fcal$ and the duality denoted by $\left<.,.\right>$ is the contraction of a 1-form  $S'\in\Gamma(T^*\E(\Mcal))$ with a vector field $X$. We say that we take the quotient of $\Fcal$ by the \textit{ideal generated by the equations of motion}. If the equations of motion are linear and normally hyperbolic, this ideal coincides with $\Fcal_0$, so $\Fcal_S$ is exactly the space of multilocal functionals on $\Ecal_S$. In general, our point of view is more in line with the quantum theory and it avoids complications related to characterization of the geometrical structure of $\Ecal_S$. It is also close in spirit to the way one studies varieties in algebraic geometry.
\subsection{Deformation quantization}\label{Def:quant}
Deformation quantization is a method to construct quantum theories from the classical ones by deforming the commutative product on the space $\Fcal$ of functionals to a non-commutative product $\star$ on $\Fcal[[\hbar]]$ (the space of formal power series in $\hbar$. 
\be
F\star G=\sum\limits_{n=0}^\infty\hbar^n B_n(F,G)\,,
\ee
and we require that
\begin{align*}
B_0(F,G)&=F\cdot G\,,\\
B_1(F,G)-B_1(G,F)&=i\hbar\{F,G\}\,,
\end{align*}
where $(F\cdot G)(\ph)=F(\ph)G(\ph)$ is the pointwise product of functionals, $\{.,.\}$ is the Peierls bracket and the second condition is a realization of the idea that in the quantum theory one ``replaces canonical brackets with commutators''. The existence of higher order terms is necessary to avoid the Groenewald-van Hove no-go theorem. This result, established first for finite dimensional phase spaces, states that a Dirac type quantization prescription is not possible in the strict sense \cite{Groenewald,vanHove}. More concretely (see \cite{Waldmann}), consider the Lie algebra $\mathfrak{h}$ spanned by the canonical coordinate and momenta functions $q^1,\ldots,q^N,p_1,\ldots,p_N$ and $1$, equipped with the canonical Poisson bracket $\{.,.\}_{\mathrm{can}}$. This is a Lie subalgebra of $\mathfrak{g}\doteq(\mathrm{Pol}(T^*\RR^N),\{.,.\}_{\mathrm{can}})$ (polynomials on the phase space). The Groenewald-van Hove Theorem states that
there exists no faithful irreducible representation of $\mathfrak{h}$ by operators on a dense domain of some Hilbert space which can be extended to a representation of $\mathfrak{g}$, so there is no quantization map $Q$ from $\frakg$ to the space of operators on some Hilbert space $\Hcal$, such that
\[
[Q(f),Q(g)]=i\hbar Q(\{f,g\})\,.
\]
Deformation quantization \cite{BFFLS,BFFLS2} provides a way out since it weakens the above condition to 
\[
[Q(f),Q(g)]=Q([f,g]_{\star})=i\hbar Q(\{f,g\})+\Ocal(\hbar^2)\,.
\]

In field theory, as we have seen in the previous section, the space of functions on the $N$-dimensional phase space is replaced by $\Fcal$, the space of multilocal functionals on $\Ecal$, which is now infinite dimensional. The Poisson structure is provided by the Peierls bracket, defined with the use of the causal propagator $\Delta_S$. In the simplest case, when $S$ is quadratic, one can construct the $\star$-product using a Moyal-type formula. To avoid the functional analytic problems, for the moment we consider only regular functionals $\Fcal_{\reg}$, i.e. those for which $F^{(n)}(\ph)$ is a smooth compactly supported section for all $n\in\NN$, $\ph\in\Ecal$. On $\Fcal_{\reg}[[\hbar]]$ we can now define
\be\label{starprod}
(F\star G)(\ph)\doteq\sum\limits_{n=0}^\infty \frac{\hbar^n}{n!}\left<F^{(n)}(\ph),\left(\tfrac{i}{2}\Delta_S\right)^{\otimes n}G^{(n)}(\ph)\right>\,,
\ee 
which can be formally written as $e^{\frac{i\hbar}{2}\left\langle\Delta_S,\frac{\delta^2}{\delta\ph\delta\ph'}\right\rangle}F(\ph)G(\ph')|_{\ph'=\ph}$. 

Let us consider the example of a free scalar field and regular functionals of the form 
\[
F_f(\ph)=\int_{\MM}f(x)\ph(x) d^4x\equiv \int f\ph d\mu \,,\qquad \textrm{where}\ f\in \Dcal\,.
\]
We can now define  $\Wcal(f)\doteq\exp(iF_f)$ and check that
\begin{align*}
\left< (\Wcal(f))^{(1)}(\ph),h\right>&= \frac{d}{d\lambda}\left( \Wcal(f) (\ph + \lambda h)\right)|_{\lambda=0}= \frac{d}{d \lambda}  e^{i \int  f (\ph + \lambda h)d\mu} \big|_{\lambda=0}=\\
&=\left( i \int f h\,d\mu_g\right) \Wcal(f)(\ph).\\
\end{align*}
and thus
\[
\left<(\Wcal(f))^{(n)}(\ph),h^{\otimes n}\right>=  \left( i \int f h\,d\mu_g\right)^n \Wcal(f)(\ph).
\]
Inserting this into the $\star$-product formula, we find,
\begin{eqnarray}
\Wcal(f) \star \Wcal(\tilde{f}) &=&  \sum_{n=0}^{\infty} \left( \frac{i \hbar}{2}\right)^n \frac{(-1)^n}{n!} \left( \int \Delta_S(x,y) \tilde{f}(y)f(x)d\mu_g(x)d\mu_g(y)\right)^n\Wcal(f+\tilde{f})\nonumber\\
&=& e^{-\frac{i \hbar}{2}\Delta_S(f,\tilde{f})} \Wcal(f+\tilde{f}),\label{Weyl}
\end{eqnarray}
which reproduces the Weyl relations. 

Having the interacting theory in mind, we will need to extend the star product to functionals more singular than the elements of $\Fcal_{\reg}$, including, in particular, the non-linear local functionals. To understand possible obstructions to this extension we have to analyze the singularity structure of $\Delta_S$. Using the theorem of propagation of singularities (see section \ref{sec:Functional derivatives}), we find that \cite{Rad}
\[
\WF(\Delta_S)=\{(x,k;x′,-k')\in \dot{T}^*M^2|(x,k)\sim(x',k')\}\,,
\]
where $\dot{T}$ denotes the tangent bundle minus the zero section and $(x,k)\sim(x',k')$ means that there exists a lighlike geodesic connecting $x$ and $x'$, to which $k$ is co-tangent and $k'$ is a parallel transport of $k$. 
We observe that the WF set of $\Delta_S$ is composed of two parts: one with $k\in (\overline{V}_+)_x$ and another with $k\in (\overline{V}_-)_x$, where $\overline{V}_\pm$ is (the dual of) the closed future/past lightcone  This observation allows one to decompose  $\Delta_S$ into two distributions with WF sets corresponding to these two components. Such a decomposition is a local version of the decomposition according to positive and negative energies  \cite{Rad}. 
%
%
Therefore we can split $\Delta_S$ into
\[
\tfrac{i}{2}\Delta_S=\Delta_S^+-H\,,
\]
where the WF set of $\Delta_S^+$ is
\begin{equation}\label{spectrum}
\WF(\Delta_S^+)=\{(x,k;x′,-k')\in \dot{T}^*M^2|(x,k)\sim(x',k'), k\in (\overline{V}_+)_x\}\,,
\end{equation}
and we also require that $\Delta_S=2\textrm{Im} (\Delta_S^+)$ and that $\Delta_S^+$ is a distributional bisolution to the field equation and is of positive type (i.e. $\left< \Delta_S^+,\bar{f}\otimes f\right>\geq 0$). 
On Minkowski space one could choose $\Delta_S^+$ as the  Wightman 2-point-function. On general globally hyperbolic spacetimes such a decomposition always exists but is not unique. If $H$ and $H'$ correspond to two such choices of decomposition, then $H-H'$ is a smooth symmetric bisolution to the field equations.

We can now replace $\tfrac{i}{2}\Delta_S$ with $\Delta_S^+$ in \eqref{starprod} and the new product, denoted by $\star_H$ can be extended from $\Fcal_{\reg}$ to $\Fcal_{\mc}$ defined as the space of functionals with functional derivatives satisfying
 \be\label{mlsc}
\WF(F^{(n)}(\ph))\subset \Xi_n,\quad\forall n\in\NN,\ \forall\ph\in\E\,,
\ee
where $\Xi_n$ is an open cone defined as 
\be\label{cone}
\Xi_n\doteq T^*\MM^n\setminus\{(x_1,\dots,x_n;k_1,\dots,k_n)| (k_1,\dots,k_n)\in (\overline{V}_+^n \cup \overline{V}_-^n)_{(x_1,\dots,x_n)}\}\,,
\ee
where $(\overline{V}_{\pm})_x$ is the closed future/past lightcone understood as a conic subset of $T^*_x\MM$.

On $\Fcal_{\reg}$ the two star products $\star$ and $\star_H$ are isomorphic structures and the intertwining map is given by 
\be
\al_H\doteq e^{\frac{\hbar}{2}\langle H,\frac{\delta^2}{\delta\ph^2}\rangle}\,,
\ee
so that
\be
F\star_HG=\al_H\left((\al^{-1}_HF)\star(\al_H^{-1}G)\right)\,,\qquad F,G\in\Fcal_{\reg}\,.
 \ee
 In the language of formal deformation quantization one says that products
 $\star$ and $\star_H$ are related by a gauge transformation, so they provide the same deformation quantization. In general a gauge transformation between star products is given by $F\mapsto F+\sum_{\hbar\geq 1}\hbar^n D_n(f)$, where each $D_n$ is a differential operator. In our case, $D_n=\frac{1}{n!}\left\langle \frac{1}{2}(H-H'),\frac{\delta^2}{\delta\ph^2}\right\rangle^n$.
   
Physically, the transition between  $\star$ and $\star_H$ corresponds to normal ordering, so introducing the $\star_H$-product is just an algebraic version of Wick's theorem. As stated before, the codomain of $\al_{H}:\Fcal_{\reg}\rightarrow\Fcal_{\reg}$ can be ``completed'' (with the use of the H\"ormander topology \cite{BDF09,Hoer}) to a larger space $\Fcal_{\mc}$ and we can also build a corresponding (sequential) completion $\al_H^{-1}(\Fcal_{\mc})$ of the domain. This amounts to extending  $\Fcal_\reg$ with all elements of the form $\lim_{n\rightarrow \infty}\al_H^{-1}(F_n)$, where $(F_n)$ is a convergent sequence in $\Fcal_{\mc}$. The quantum algebra $\fA$ of the free theory is defined as the space of families $F_H$,  labeled by possible choices of $H$, where $F_H\in\fA_H\doteq(\Fcal_{\mc}[[\hbar]],\star_H)$
 fulfill the relations
\[
 F_{H'} = \al_{H'-H} F_H\,,
\]
and the product is
\[
 (F \star G)_H = F_H \star_H G_H.
\]
We can summarize the relations between the algebraic structures we have introduced so far by means of the following diagram:
\[
\begin{CD}
(\Fcal_{\reg},\star)@>\al_H>>(\Fcal_{\reg},\star_H)\\
@V{\textrm{dense}}V{\cap}V@V{\textrm{dense}}V{\cap}V\\
\fA@<\al^{-1}_H<<(\Fcal_{\mc},\star_H)
\end{CD}
\]
A family of coherent states on $\fA$ is obtained by the prescription
\[
\omega_{H,\ph}(F)\doteq \al_H(F)(\ph)=F_H(\ph)\,,
\]
where $\ph\in\Ecal_S$. This makes sense since $F_H$ is a functional in $\Fcal_{\mc}$,  so evaluation at a field configuration $\ph$ is well defined. 

As an example we can consider the free scalar field with the generalized Lagrangian
\be\label{Lscalar}
L_0(f)(\ph)=\frac{1}{2}\int\limits_{\MM} (\partial_\mu\ph\partial^\mu\ph-m^2\ph^2)f\,d\mu\,.
\ee
Let us define  $\tilde\fA$ as the subalgebra of $\fA$ generated by the Weyl generators $\Wcal(f)\doteq\exp(iF_f)$. Since 
\[
\langle H,\frac{\delta^2}{\delta\ph^2}\rangle\left(i \int f\ph d\mu\right)^n=-\frac{n!}{(n-2)!}H(f,f)\left(i \int f\ph d\mu\right)^{n-2}\,,
\]
we conclude that
\[
\langle H,\frac{\delta^2}{\delta\ph^2}\rangle\left(\Wcal(f)\right)=-H(f,f)\Wcal(f)
\]
so
\[
\al_H\left(\Wcal(f)\right)=e^{-\frac{\hbar}{2}H(f,f)}\Wcal(f)\,.
\]
We can now consider a state obtained be evaluation at $\ph=0$. We see that
\[
\omega_{H,0}\left(\Wcal(f)\right)=e^{-\frac{\hbar}{2}H(f,f)}\,,
\]
so $H$ plays the role of the covariance of the state $\omega_{H,0}$.

Going on-shell corresponds to taking the quotient of  $\tilde\fA$  by the ideal  $\tilde\fA_0$  generated by the elements
\begin{equation}\label{ideal}
\Wcal((\square +m^2)f)-1\ , \ f\in\cD\ .
\end{equation}
Note that $S_0'(\ph)=(\square +m^2)\ph$, and using partial integration, we can conclude that $F_{(\square +m^2)f}(\ph)=\int S_0'(\ph) fd\mu=\left<S'_0,f\right>$, so taking the quotient by $\tilde\fA_0$ implements the free field dynamics. We denote $\tilde{\fA}/\tilde{\fA}_0$ by $\tilde{\fA}_{S_0}$, and we see that $\omega_{H,0}$ is well defined on $\tilde{\fA}_{S_0}$, as $H$ is a bisolution for the operator $P=\square +m^2$.
\subsection{Interpretation in terms of K\"ahler geometry}
There is an elegant geometrical interpretation of the structures introduced in the  previous section. Analogous to  K\"ahler geometry, $H$ plays the role of the Riemannian metric on $\mathcal{Y}\equiv\Dcal/P\Dcal$ and the 2-point function $\Delta_S^+=\frac{i}{2}\Delta_S+H$ is a Hermitian 2-form on $\mathcal{Y}$.

The pair $(H,\Delta_S)$ induces an anti-involution $J$ on $\Ycal$ (i.e. $J^2=-1$) and if $\Delta_S^+$ is a 2-point function of a quasi-free pure Hadamard state, the triple $(H,\Delta_S,J)$ is a  K\"ahler structure on $\mathcal{Y}$. To see how this come about, let us recall some well known results (see for example \cite{DG,AS} for proofs). Let $\mathcal{Y}^{\CC}$ denote the complexification of $\Ycal$. If $\Delta_S^{\CC}$ and $H^{\CC}$ are canonical extensions of $\Delta_S$ and $H$ to   $\mathcal{Y}^{\CC}$, then the following are equivalent:
\begin{enumerate}
\item $H^{\CC}+\frac{i}{2}\Delta_S^{\CC}\geq 0$ on $\mathcal{Y}^{\CC}$,
\item $|\left<f_1,\Delta_Sf_2\right>|\leq 2\left<f_1,Hf_1\right>^{1/2}\left<f_2,Hf_2\right>^{1/2}$, $f_1,f_2\in\Ycal$.
\end{enumerate}
We can complete $\Ycal$ with the product $(.,.)_H\doteq \left<.,H.\right>$ to a real Hilbert space $\Hcal$ and the inequality 2 implies that $\Delta_S$ is a bilinear form on $\Hcal$ with norm less or equal 2. Therefore, there exists an operator $A\in \Bcal(\Hcal)$ with $||A||\leq 1$ such that
\[
\left<f_1,\Delta_Sf_2\right>=2(f_1,Af_2)_H
\]
If $A$ has a trivial kernel, then we can just construct the polar decomposition $A=-J|A|$ and $J$ satisfies $J^2=-1$, so we can use it to equip $\Ycal$ with an almost-complex structure. More generally, following the proof of theorem 17.12 of \cite{DG}, we can define $\Ycal_{\textrm{sg}}\doteq \ker A$ and  $\Ycal_{\textrm{reg}}\doteq\Ycal_{\textrm{sg}}^{\perp}$. We set $A_{\mathrm{reg}}\doteq A\upharpoonright_{\Ycal_{\textrm{reg}}}$ and construct the polar decomposition $A_{\mathrm{reg}}=-J_{\mathrm{reg}}|A_{\mathrm{reg}}|$. If the dimension of $\Ycal_{\textrm{sg}}$ is even or infinite (which is the case in the situation we are interested in), then there exist an orthogonal anti-involution $J_{\mathrm{sg}}$ on $\Ycal_{\textrm{sg}}$ and we set $J=J_{\mathrm{reg}}\oplus J_{\mathrm{sg}}$.

 Note that $J$ induces also a complex structure on $\Ecal_{S,\textrm{sc}}$ if we set $j\Delta_Sf\doteq\Delta_S Jf$, where $f\in\Ycal$. We define the holomorphic and anti-holomorphic subspaces of $\Ycal^{\CC}$ as
\begin{align*}
\Zcal&\doteq\{(f-iJf)|f\in\Ycal\}\,,\\
\overline{\Zcal}&\doteq \{(f+iJf)|f\in\Ycal\}\,,
\end{align*}
respectively. Projections onto these subspaces are given by $\1_\Zcal=\frac{1}{2}(\1-iJ^{\CC})$ and $\1_{\overline{\Zcal}}=\frac{1}{2}(\1+iJ^{\CC})$

The CCR algebra corresponding to $(\Ycal,\Delta_S)$ is just the algebra $\tilde{\fA}_{S_0}$ introduced at the end of the previous section. Note that $\omega_{H,0}$ is a state on $\tilde{\fA}_{S}$ with covariance $H$. This state is pure if and only if the triple $(H,\Delta_S,J)$ is a  K\"ahler structure on $\mathcal{Y}$, i.e. all three structures are compatible and $\Delta_S\circ J=2H$. We can now decompose $\Delta_S^+$ in  the holomorphic basis.  A straightforward computation shows that
\[
\left<\1_{\overline{\Zcal}}f_1,\Delta_S^+(\1_{\Zcal}f_2)\right>=\left<f_1,\Delta_S^+f_2\right>\,,
\]
where $f_1,f_2\in\Ycal^{\CC}$ and remaining components vanish, so in the holomorphic basis $\Delta_S^+$ is represented by
\[
\begin{pmatrix}0&0\\
\Delta_S^+&0\end{pmatrix}\,,
\]
so it acts only on the holomorphic part of the first argument and the anti-holomorphic part of the second argument. 
\section{Time ordered products, and the perturbative construction of local nets}

In the previous section we were concerned only with the quantization of free theories (quadratic actions). Given an arbitrary action $S$ we first split $S=S_0+S_I$, where $S_0$ is quadratic. We already know how to quantize the classical model defined by $S_0$, so now is the time to introduce the interaction. We will do it in this section, following the ideas of  \cite{BP57,BS59,Hep66,EG,SR50,Ste71}, but before we start, we give a heuristic argument justifying our construction. The idea is to use the analogy with the interaction picture of quantum mechanics. Let $H_0$ be the Hamiltonian operator of the free theory and let  $H_{t,I}=-\int_K  \normOrd{\Lcal_I(0,\mathbf x)}d\sigma_t$ be the interaction Hamiltonian, where $\normOrd{\Lcal_I}$ is the normal-ordered Lagrangian density, constructed from the classical quantity ${\Lcal_I}$ and $K$ is some compact subset of $\Sigma$ (as explained in section \ref{canandP}). The rigorous ``smoothed-out'' version of the Hamiltonian quantization will be given in section \ref{Section:Hamiltonian}. 

We would like to use the Dyson formula and define the  time evolution operator as a time ordered exponential, i.e.

{\[U(t,s)=e^{itH_0}e^{-i(t-s)(H_0+H_I)}e^{-isH_0}=\]
\[1+\sum_{n=1}^\infty\frac{i^n}{n!}\int_ {([s,t]\times\mathbb R^3)^n}T(\normOrd{\Lcal_I(x_1)}\dots \normOrd{\Lcal_I(x_n)})d^{4n}x\,,\]}
where
\[x\mapsto \Lcal_I(x)=e^{iH_0x^0}\normOrd{\Lcal_I(0,\mathbf x)}e^{-iH_0x^0}
\]
is an operator-valued function and $T$ denotes time-ordering. Heuristically, one could use the unitary map defined above to obtain interacting fields as
\be\label{phiI}
\ph_I(x)=U(x^0,s)^{-1}\ph(x)U(x^0,s)=U(t,s)^{-1}U(t,x^0)\ph(x)U(x^0,s)\,,\ee
where $s<x^0<t$.

There are, however, serious problems with this heuristic formula. Firstly, typical Lagrangian densities, e.g.    
 $\normOrd{\Lcal_I(x)}=\normOrd{\ph(x)^4}$, 
can not be restricted to $\Sigma_0$ as operator valued distributions. This is the source of the so called UV problem. Moreover, as mentioned before, having the sharp cutoff function in the Lagrangian and Hamiltonian (like in \eqref{action}) leads to additional divergences  (St\"uckelberg divergences).
Finally there is the adiabatic limit problem related to the fact that the integral over $\mathbf x$ does not exist. Last but not least, the overall sum might not converge.
\subsection{Causal perturbation theory}
Some of these problems mentioned in the introduction can be easily dealt with by a slight modification of the above ansatz. For example, we avoid the St\"uckelberg divergences by replacing the sharp cutoffs with smooth test functions. The UV problem is solved by using causal perturbation theory in the sense of Epstein and Glaser \cite{EG}. In this method one switches the interaction on only in a compact region of spacetime and then takes the adiabatic limit (understood as a certain inductive limit) on the level of interacting observable algebras. These modifications of the Dyson formula ansatz lead to the definition of the \textit{formal S-matrix}:
\[S(g)=1+\sum_{n=1}^\infty\frac{i^n}{n!}\int g(x_1)\dots  g(x_n)T(\normOrd{\Lcal_I(x_1)}\dots \normOrd{\Lcal_I(x_n)})\,,\]
where $g$ is a test density. In order to make this formula well defined, we need to make sense of the time-ordered products of $\normOrd{\Lcal_I(x_i)}$. This will be done by Epstein-Glaser renormalization. Finally, the formula \eqref{phiI} has to be reinterpreted as a definition of a distribution, rather than a function. Hence, for a test density $f$ we obtain
\begin{align*}
\int f(x)\ph_I(x)&=S(g)^{-1}\sum_{n=0}^\infty \frac{i^n}{n!}\int f(x)g(x_1)\dots g(x_n)T\ph(x)\Lcal_I(x_1)\dots \Lcal_I(x_n)=\\
&=\frac{d}{d\lambda}\left.S(g)^{-1}S(g,\lambda f)\right|_{\lambda=0}\,,
\end{align*}
where $S(g,f)$ is the formal S-matrix with the Lagrangian density $g\Lcal_I+f\ph$. This is the so called \textit{Bogoliubov's formula} \cite{BS59}. 

We are now left with the problem of defining the time-ordered products on Wick-ordered quantities $\normOrd{\Lcal_I(x)}$. We have already mentioned in section \ref{Def:quant} that the normal ordering corresponds to passing between the star product $\star_H$ on $\fA_H$ and $\star$ on $\fA$. Note that elements of $\fA_H$ are functionals on $\Ecal_S$, so we can identify classical quantities in $\Fcal_{\loc}$ with quantum ones by means of $\TT^H_1:\Fcal_{\loc}\rightarrow \fA_H$ defined by $\TT^H_1=\id$. Composing with $\al_H^{-1}$ we obtain a map $\TT_1:\F\rightarrow \fA$, $\TT_1\doteq \al_H^{-1}\circ \TT^H_1$ which maps ``classical'' to ``quantum''. This map is interpreted as the normal ordering and we can now make an identification
\[
\normOrd{F}\doteq \TT_1F\,,\qquad F\in\Fcal_{\loc}\,.
\]
In the context of local covariance, this choice of normal ordering is not the most optimal one. This is because a family of Hadamard states cannot be chosen in a covariant way (i.e. compatible with embedding of globally hyperbolic spectimes), but a family of Hadamard parametrices can. The latter are bi-solutions of the linearized equations of motion only up to smooth terms. It is, therefore, more appropriate to define the normal ordering by a prescription where only the singular part of $H$ is subtracted from the correlation function of two fields, as opposed to the prescription where one subtracts the full $H$. Concretely, we set $\TT^H_1=\al_{w}$ so $\TT_1=\al^{\minus}_{H-w}$, where $w$ is the smooth part of the Hadamard 2-point function (see \cite{KW} and \cite{FR15} for a recent review). More precisely, this has to be understood as $\lim_{N\rightarrow \infty} \al^{\minus}_{H-w_N}F$ for $F\in\Fcal_{\loc}$ and this limit makes sense, because the series converges after finitely many steps. The function $w_N$ appearing in this prescription is $2N + 1$ times continuously
differentiable and it appears in the 2-point function as $\Delta_S^+=W_N^{sing}+w_N$. The singular part $W_N^{sing}$ is of the form ``$\frac{u}{\sigma}+v\ln\sigma$'', with $\sigma(x,y)$ denoting the square of the length of the geodesic connecting $x$ and $y$ and with geometrical determined smooth functions $u$ and $v$. For a more precise definition of what is the Hadamard form for of a 2-point function, see for example \cite{KW} or a recent review \cite{FR15}. 

More concretely, for a density of the form
\[
\Phi^{A,n}(f)(\ph)\doteq\int f(x)\frac{d^n}{d\la^n}A(x)(\ph)\big|_{\la=0}d^4x\,,
\]
where $A(x)(\ph)=e^{\la p(\nabla)\ph(x)}$ (here $p$ is a polynomial in covariant derivatives) we define
\[
\normOrd{\Phi^{A,n}}(f)\equiv {\TT_1}(\Phi^{A,n}(f)) \doteq \al_H^{\minus} \int f\,\frac{d^n}{d\la^n}A_H \big|_{\la=0}d^4x\,,
\]
where
\[
A_H(x)=e^{\frac12p(\nabla)\otimes p(\nabla)w_N(x,x)}A(x)\,,
\]

Unfortunately, the modifications which we have done so far do not render the time-ordered products well defined. Heuristically, we would like the time-ordered product of two functionals to be \eqref{tordered}, i.e.
\[
(F\T G)=e^{\hbar\left\langle\frac{\delta}{\delta \ph},\Delta_{S_0}^D \frac{\delta}{\delta\ph'}\right\rangle}F(\ph)G(\ph')|_{\ph'=\ph}\,,
\]
where $\Delta_{S_0}^D\doteq\frac12(\Delta_{S_0}^R+\Delta_{S_0}^A)$ is the Dirac propagator. This makes sense if both $F$ and $G$ are regular functionals (i.e. elements of $\Fcal_{\reg}$). This indeed provides the correct notion of time-ordering, since
\be\label{ordering}
F\T G=\left\{\begin{array}{rcl}
F\star G&\textrm{if}&\supp G\prec\supp F\,,\\
G\star F&\textrm{if}&\supp F\prec\supp G\,,
\end{array}\right.
\ee
where the relation ``$\prec$'' means ``not later than'' i.e. there exists a Cauchy surface which separates $\supp G$ and $\supp F$ and in the first case  $\supp F$ is in the future of this surface and in the second case it's in the past.

The time ordered product defined by \eqref{tordered} is associative, commutative and isomorphic to the point-wise product  by means of
\be\label{Tprod}
F\T G=\TT\left(\TT^{-1}F\cdot \TT^{-1}G\right)\,,
\ee
where 
\be
\TT=e^{i\hbar\langle\Delta_{S_0}^D,\frac{\delta^2}{\delta\ph^2}\rangle} 
\ee
or more precisely
\[
(\TT F)(\ph)\doteq \sum_{n=0}^\infty \frac{\hbar^n}{n!}\left<(i\Delta_{S_0}^D)^{\otimes n},F^{(2n)}(\ph)\right>\,.
\]
The linear operator $\Tcal$ defined above is sometimes called the ``time-ordering operator'' and it is interpreted as a map which goes from the ``classical'' to the ``quantum'', i.e.
\[
{(\Fcal_\reg,\cdot)\atop \textrm{classical}}\xrightarrow{\TT} {(\fA_{\reg},\star,\T)\atop \textrm{quantum}}\,,
\]
where $\fA_{\reg}\subset \fA$ is the range of $\TT$. Note that on the quantum side we have \textit{two} products. Using the time-ordered product we can express the formal $S$-matrix $\Scal:\F_\reg[[\hbar]]\rightarrow\F_\reg[[\hbar]]$ as the time ordered exponential:
\be\label{Smatrix}
\Scal(V)\doteq e_{\sst{\TT}}^{iV/\hbar}=\TT\big( e^{\TT^{-1}iV/\hbar}\big) \,.
\ee
According to our interpretation of $\TT$, $\Scal$ is a map on the ``quantum'' algebra $\fA$ to itself. Interacting fields are obtained by means of the Bogoliubov formula, which reads
\begin{align}\label{RV}
R_V(F)&= -i\hbar\frac{d}{d\la}\left.\left(\Scal(V)^{\star\minus}\star\Scal(V+\la F)\right)\right|_{\lambda=0}=\nonumber\\
&=\left(e_{\sst{\TT}}^{i V/\hbar}\right)^{\star\minus}\star\left(e_{\sst{\TT}}^{i V/\hbar}\T  F\right)\,.
\end{align}
We interpret $R_V(F)$ as the interacting quantity corresponding to $F$. We can also define the interacting star product as
\[
F\star_V G\doteq  R_V^{-1}(R_VF\star R_V G)\,.
\]
The interacting theory is given in terms of the algebra $(\Fcal_{\reg},\star_V )$ and $R_V$ acts as the intertwining map between the free quantum theory and the interacting quantum theory, i.e.
\be\label{diagreg}
{(\Fcal_\reg,\cdot)\atop \textrm{classical}}\xrightarrow{\TT} {(\fA_{\reg},\star,\T)\atop {\textrm{free}\atop \textrm{quantum}}}\xrightarrow{R_V^{-1}} {(\fA_{\reg},\star_V)\atop {\textrm{interacting}\atop \textrm{quantum}}}\,.
\ee

All these formulas make sense if we restrict ourselves to regular functionals. This is, however, not satisfactory for our purposes, since typical interactions are local and non-linear, hence not regular. In the first attempt we could try to pass to a different star product, which amounts to replacing $\Delta_{S_0}$ by $\Delta_{S_0}^+$ and $\Delta_{S_0}^D$ by the Feynman propagator $\Delta_{S_0}^F=i\Delta_{S_0}^D+H$, so our diagram gets modified to
\[
{(\Fcal_\reg,\cdot)\atop \textrm{classical}}\xrightarrow{\TT^H}(\fA_{\reg},\star_H,\cdot_{\TT^H})\xrightarrow{\al_H^{-1}}{(\fA_{\reg},\star,\T)\atop {\textrm{free}\atop \textrm{quantum}}}\xrightarrow{R_V^{-1}} {(\fA_{\reg},\star_V)\atop {\textrm{interacting}\atop \textrm{quantum}}}\,,
\]
where $\TT^H\doteq e^{i\hbar\langle\Delta_{S_0}^F,\frac{\delta^2}{\delta\psi^2}\rangle}$, so $\TT=\al_H^{-1}\circ\TT^H$. This modification of the formalism, however, doesn't solve the problem yet. To extend our formalism to arbitrary local functionals, we need to perform the renormalization. The difficulty which we have to face is the fact that the WF set of  $\Delta_{S_0}^F$ at 0 is like the WF set of the Dirac delta and therefore the tensor powers of $\Delta_{S_0}^F$ cannot be contracted with derivatives of local functionals.

However, there is a way to extend $\TT^{\sst H}$ to \textit{local} functionals. First we extent $\TT^{\sst H}$ to $\F_{\loc}$ by setting $\TT^{\sst H}=\TT_1^{\sst H}$. We discuss here only the Minkowski spacetime situation, so we can set $\TT_1^H=\id$. The subspace $\TT^{H}(\Fcal_{\loc})\subset\fA$ will be denoted by $\fA^H_{\loc}$. Let us define the n-th order time-ordered product as
\[
\TT^{\sst H}_n(F_1,\dots,F_n)\doteq F_1{\cdot_{\TT_H}}\dots {\cdot_{\TT_H}} F_n,
\]  
whenever it exists. It is well defined for $F_1,\dots,F_n\in \Fcal_{\loc}$ with pairwise disjoint supports and we will denote this domain of definition by $(\Fcal_{\loc})^{\otimes n}_{\mathrm{pds}}$. Moreover 
\be\label{causalfact}\tag{\bf T 1}
\TT^{\sst H}_n(F_1,\dots,F_n)=\TT^{\sst H}_k(F_1,\dots,F_k)\star_H \TT^{\sst H}_{n-k}(F_{k+1},\dots,F_n)\,,
\ee
 if the supports $\supp F_i$, $i=1,\dots,k$ of the first $k$ entries do not intersect the past of the supports $\supp F_j$, $j=k+1,\dots,n$ of the last $n-k$ entries. This property is called the \textit{causal factorisation property}. We will take it as an axiom that we want to impose while extending time-ordered products to arbitrary local arguments.  The other axioms include
 \begin{enumerate}[(\bf T 1)]
 \setcounter{enumi}{1}
\item {\bf Starting element}: $\TT^{\sst H}_0=1$, $\TT^{\sst H}_1=\id$,
\item {\bf Symmetry}: Each $\TT^{\sst H}_n$ is symmetric (graded symmetric if Fermions are present).
\item {\bf $\ph$-Locality}: $\TT^{\sst H}_{n}(F_1,\ldots,F_n)$, as a functional on $\Ecal$, depends on $\ph$ only via the functional derivatives of $F_1,\ldots,F_n$.
\label{Tf}
\end{enumerate} 
In the seminal paper \cite{EG}, Epstein and Glaser have shown that such a family of maps exists and non-uniqueness in  defining $\TT_n^{\sst H}$'s is fully absorbed  into adding multilinear maps $Z_n: \fA^{\otimes n}_{\loc}\rightarrow\fA_{\loc} $, i.e.
 \[
 \widetilde{\TT^{\sst H}}_n(F_1,\ldots,F_n)= \TT^{\sst H}_n(F_1,\ldots,F_n)+\Zcal_n(F_1,\ldots,F_n)\,,
 \]
 where $\{\TT^{\sst H}_n\}_{n\in\NN}$ and  $\{\widetilde{\TT^{\sst H}}_n\}_{n\in\NN}$ are two choice of time-ordered products that coincide
up to order $n-1$. The renormalized S-matrix is now defined by
\[
\Scal(V)=\sum_{n=0}^\infty \tfrac{1}{n!}\TT_n(V,\ldots,V)=\sum_{n=0}^\infty \tfrac{1}{n!}\al_{\sst H}^{\minus}\circ\TT^{\sst H}_n(\al_{\sst H}V,\ldots,\al{\sst H}V)\,.
\]
The causal factorisation property for time ordered products implies that the S-matrix satisfies Bogoliubov's factorization relation
\begin{equation}\label{factorisation}
{\Scal(V_1+V_2+V_3)=\Scal(V_1+V_2)\Scal(V_2)^{-1}\Scal(V_2+V_3)} 
\end{equation}
if the support of $V_1$ does not intersect the past of the support of $V_3$.

We can also define the renormalized map $\TT:\Fcal\rightarrow\fA$ by $\TT\doteq\bigoplus_n\al_{\sst H}^{\minus}\circ\TT^{\sst H}_n\circ m^{-1}$, where $m^{-1}:\Fcal\to S^\bullet\Fcal^{(0)}_\loc$ is the inverse of the multiplication, as defined in \cite{FR13} and $\Fcal^{(0)}_\loc$ is the space of local functionals that vanish at 0. The renormalized time ordered product $\T$ is now a binary operation defined on the domain $D_{\TT}\doteq \TT(\Fcal)$. Analogously to the diagram \eqref{diagreg}, we obtain now
\be\label{diagloc}
{(\Fcal,\cdot)\atop \textrm{classical}}\xrightarrow{\TT} {(\fA,\star,\T)\atop {\textrm{free}\atop \textrm{quantum}}}\xrightarrow{R_V^{-1}} {(\fA,\star_V)\atop {\textrm{interacting}\atop \textrm{quantum}}}\,,
\ee
with the caveat that $\T$ is well defined on $D_{\TT}\subset \fA$.

We will now discuss in detail the ambiguity arising in defining $\TT_n$'s. In physics this is known as the \textit{renormalization ambiguity}. To understand it better and to relate it with the notion of the \textit{renormalization group}, we first define a map $\Zcal:\fA_\loc[[\hbar]]\rightarrow\fA_\loc[[\hbar]]$ by summing up all the $\Zcal_n$'s relating two chosen prescriptions to define the time-ordered products. For any two choices of $\TT_n$'s the corresponding map $\Zcal$ has the following properties:
\begin{enumerate}[{(\bf Z 1)}]
\item $\Zcal(0)=0$,\label{Zs}
\item $\Zcal^{(1)}(0)=\id$,
\item $\Zcal=\id+\Ocal(\hbar)$,
\item $\Zcal(F+G+H)=\Zcal(F+G)+\Zcal(G+H)-\Zcal(G)$, if $\supp\,F\cap\supp\,G$,
\item $\frac{\delta \Zcal}{\delta\ph}=0$.\label{Zf}
\end{enumerate}
The group of formal diffeomorphisms of $\fA_\loc[[\hbar]]$ that fulfill {(\bf Z \ref{Zs})} -- {(\bf Z \ref{Zf})} is called the  \textit{St{\"u}ckelberg-Petermann renormalization group} $\Rcal$. There is a relation between the formal S-matrices and elements of $\Rcal$ provided by the main theorem of renormalization \cite{DF04,BDF09}. It states that for two S-matrices  $\Scal$ and $\hat{\Scal}$, built from time ordered products  satisfying the axioms ({\bf T 1})
-- ({\bf T~\ref{Tf}}), there exists $\Zcal\in\Rcal$ such that
\be
\hat{\Scal}=\Scal\circ \Zcal\,, 
\ee
where $\Zcal\in\Rcal$ and conversely, if $\Scal$ is an S-matrix satisfying the axioms ({\bf T 1})
-- ({\bf T~\ref{Tf}}) and $\Zcal\in\Rcal$ then also $\hat{\Scal}$ fulfills the axioms.
\subsection{Methods for explicit construction of time-ordered products}\label{explicit: constr}
The proof of existence of time-ordered products with properties  ({\bf T 1})
-- ({\bf T~\ref{Tf}}) given in \cite{EG} is rather abstract and relies on an inductive argument. For practical purposes an existence result is not sufficient and one would like to obtain some explicit formulas for $\TT_n$'s. In this section we will review results which show that the problem of constructing time-ordered products reduces to extending certain distributions. Subsequently, we will give some concrete computational prescriptions for constructing such extensions.

We start with an example. 
Let $F=\frac12\int \ph^2fd\mu$, $G=\frac12\int \ph^2gd\mu$, $f,g\in\Dcal$. If $\supp g\cap \supp f=\varnothing$, then the time ordered product $\cdot_{\TT}$ of $F$ and $G$ is given by
\begin{multline*}\label{ex:DeltaF}
\TT_2(F,G)(\ph)=\\=(F\cdot_{\TT}G)(\ph)=F(\ph)G(\ph)+i\hbar \int \ph(x)\ph(y)f(x)g(y)\Delta_{S_0}^F(x,y)d\mu(x)d\mu(y)-\frac{\hbar^2}{2}\int \Delta_{S_0}^F(x,y)^2 f(x)g(y)d\mu(x)d\mu(y)\ .
\end{multline*}
In the least term of the expression above we have a pointwise product of a distribution with itself. This could potentially cause problems. If $x\not=y$, then if $(x,k)$ and $(y,-k')$ belong to the wave front set, then $k,-k'$ are cotangent to a null geodesics connecting 
$x$ and $y$. Moreover, $k$ is future directed if $x$ is in the future of $y$ and past directed otherwise, so the sum of two such covectors doesn't vanish. Hence, the condition on the multiplicability of distributions presented in section \ref{sec:Functional derivatives} implies that $(\Delta_{S_0}^F)^2$ as a distribution is well defined on the complement of the diagonal $\{(x,x)|x\in M\}$. Let us now consider what happens on the diagonal. There, the only restriction is $k=-k'$, hence the sum of $\WF(\Delta_{S_0}^F)$ with itself contains the zero section of the cotangent bundle at the diagonal. The problem of defining $\TT^{\sst H}_2(F,G)$ reduces now to the problem of extension of $\Delta_{S_0}^F$ to a distribution defined everywhere.

This generalizes, and the construction of $\TT^{\sst H}_n$'s reduces to extending numerical distributions defined everywhere outside certain subdiagonals in $M^n$. The construction proceeds recursively and, having constructed the time-ordered products of order $k<n$, at order $n$ one is left with the problem of extending a distribution defined everywhere outside the thin diagonal of $M^n$. On Minkowski spacetime, exploiting the translational symmetry of $\MM$, this reduces to extending a numerical distribution defined everywhere outside $0$. One way of constructing explicitly such distributional extensions relies on the so called splitting method (see for example \cite{Scharf}). Here we will take a different approach, based on the notion of Steinmann's scaling degree \cite{Ste71}. Here is the definition:

\begin{definition}
Let $U\subset \RR^n$ be a scale invariant open subset (i.e. $\lambda U=U$ for $\lambda>0$), and let $t\in\Dcal'(U)$ be a distribution on $U$. Let
$t_{\la}(x)=t(\la x)$ be the scaled distribution. The scaling degree $\mathrm{sd}$ of $t$ is 
\be
\mathrm{sd}\,t=\mathrm{inf}\{\delta\in\RR|\lim_{\la\to0}\la^{\delta}t_{\la}=0\} \ .
\ee  
\end{definition} 
The degree of divergence, another important concept used often in the literature, is defined as:
\[
\mathrm{div}(t)\doteq \mathrm{sd}(t)-n\,.
\]
The crucial result which allows us to construct time-ordered products is stated in the following theorem:
\begin{theorem}\label{extension}
Let $t\in \Dcal(\RR^n\setminus\{0\})$ with scaling degree $\mathrm{sd}\,t<\infty$. Then there exists an extension of $t$ to an everywhere defined distribution with the same scaling degree. The extension is unique up to the addition of a derivative $P(\partial)\delta$ of the delta function, where $P$ is a polynomial with degree bounded by 
$\mathrm{div}(t)$ (hence vanishes for $\mathrm{sd}\,t<n$).  
\end{theorem}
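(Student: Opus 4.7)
The plan is to split the argument into the ``easy'' case $\mathrm{sd}\, t < n$, where a canonical extension exists and is unique, and the general case $\mathrm{sd}\, t \geq n$, which is reduced to the easy case by subtracting a suitable Taylor polynomial of the test function against a UV cutoff. I will use throughout the scaling identity $\langle t, g(\cdot/\lambda)\rangle = \lambda^n \langle t_\lambda, g\rangle$ and the definition of scaling degree: for every $\delta > \mathrm{sd}\, t$ one has $\lambda^\delta t_\lambda \to 0$ in $\mathcal{D}'$.

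For the easy case $\mathrm{sd}\, t = s < n$, fix $\chi \in \mathcal{D}(\RR^n)$ with $\chi \equiv 1$ near $0$, and set $\chi_\rho(x) = \chi(x/\rho)$. For $f \in \mathcal{D}(\RR^n)$ define
\[
\langle \bar{t}, f\rangle := \lim_{\rho \to 0} \langle t, (1-\chi_\rho) f\rangle .
\]
The main technical step is to verify convergence: writing the Cauchy difference for $\rho' < \rho$ as $\langle t, (\chi_\rho - \chi_{\rho'}) f\rangle$ and rescaling by $x \mapsto \rho y$ shows the integrand becomes $\rho^n \langle t_\rho, (\chi - \chi_{\rho'/\rho}) f(\rho\,\cdot)\rangle$, and since $\lambda^\delta t_\lambda \to 0$ for any $\delta$ with $s < \delta < n$, this quantity is $O(\rho^{n-\delta}) \to 0$. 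The same estimate shows $\mathrm{sd}\, \bar{t} = s$, and that $\bar{t}$ is independent of the choice of $\chi$.

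For the general case $\mathrm{sd}\, t = s \geq n$, set $d := \lfloor \mathrm{div}(t) \rfloor = \lfloor s - n\rfloor$. Pick a ``$W$-family'' $\{w_\alpha\}_{|\alpha| \leq d} \subset \mathcal{D}(\RR^n)$ with $\partial^\beta w_\alpha(0) = \delta_{\alpha\beta}$, and define the projector
\[
(Wf)(x) := f(x) - \sum_{|\alpha| \leq d} (\partial^\alpha f)(0)\, w_\alpha(x),
\]
so that $Wf$ vanishes to order $d+1$ at $0$. By Taylor's theorem, $Wf = \sum_{|\alpha|=d+1} x^\alpha g_\alpha(x)$ with $g_\alpha$ smooth and compactly supported, hence it suffices to extend the distributions $x^\alpha t$ with $|\alpha| = d+1$; each of these has scaling degree $s - (d+1) < n$, so the first case applies. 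Setting
\[
\langle \bar{t}, f\rangle := \lim_{\rho \to 0} \langle t, (1 - \chi_\rho)\, Wf\rangle
\]
yields an extension. The bound $\mathrm{sd}\, \bar{t} = s$ follows because on one hand $\bar{t}$ restricts to $t$ away from $0$, giving $\mathrm{sd}\, \bar{t} \geq s$, and on the other, a direct rescaling argument using that $Wf$ vanishes to order $d+1$ gives $\mathrm{sd}\, \bar{t} \leq s$.

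For uniqueness, any two extensions $\bar{t}_1, \bar{t}_2$ with scaling degree $s$ satisfy $\mathrm{supp}(\bar{t}_1 - \bar{t}_2) \subset \{0\}$, so $\bar{t}_1 - \bar{t}_2 = P(\partial)\delta$ for some polynomial $P$. Since $\mathrm{sd}(\partial^\alpha \delta) = n + |\alpha|$, the condition that this difference does not exceed scaling degree $s$ forces $\deg P \leq s - n = \mathrm{div}(t)$; when $\mathrm{div}(t) < 0$ only $P = 0$ is allowed, giving the claimed uniqueness. The main obstacle I anticipate is the convergence estimate in the easy case, where one must carefully track the action of the rescaling on the cutoff differences to apply the defining property of the scaling degree uniformly in $\rho'$.
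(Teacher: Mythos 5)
The paper does not actually give a proof of Theorem~\ref{extension}; it is quoted from Steinmann \cite{Ste71} (see also Brunetti--Fredenhagen \cite{BF00}), and the $W$-projection machinery you invoke is the same one the paper later recalls in \eqref{W:proj} when discussing regularization. So your overall route is the standard one and the right one: a cutoff definition in the case $\mathrm{sd}\,t < n$, reduction to that case by a Taylor subtraction ($W$-projection) when $\mathrm{sd}\,t \geq n$, and uniqueness from the classification of distributions supported at the origin together with $\mathrm{sd}(\partial^\alpha\delta) = n + |\alpha|$.

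However, the convergence argument in the easy case has a genuine gap, and it is exactly the point you flag at the end without resolving. After rescaling, the Cauchy difference becomes $\rho^n\langle t_\rho, (\chi - \chi_{\rho'/\rho})f(\rho\,\cdot)\rangle$, and you conclude this is $O(\rho^{n-\delta})$ from $\lambda^\delta t_\lambda \to 0$. But the test functions $(\chi - \chi_{\rho'/\rho})f(\rho\,\cdot)$ do \emph{not} form a bounded family in $\Dcal(\RR^n\setminus\{0\})$ as $\rho'/\rho \to 0$: their supports reach into every punctured neighborhood of the origin. Banach--Steinhaus only upgrades pointwise convergence $\lambda^\delta t_\lambda \to 0$ to a uniform seminorm bound over test functions supported in a \emph{fixed} compact subset of $\RR^n\setminus\{0\}$, which is precisely what fails here. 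The standard repair is a dyadic telescope: fix $\rho_0$, set $\rho_k = 2^{-k}\rho_0$, and write $\langle t,(1-\chi_{\rho_N})f\rangle - \langle t,(1-\chi_{\rho_0})f\rangle = \sum_{k=1}^N\langle t,(\chi_{\rho_{k-1}}-\chi_{\rho_k})f\rangle$. Each summand rescales to $\rho_{k-1}^{\,n}\langle t_{\rho_{k-1}}, (\chi-\chi_{1/2})\,f(\rho_{k-1}\,\cdot)\rangle$, and now the test functions $(\chi-\chi_{1/2})\,f(\rho_{k-1}\,\cdot)$ live in a fixed annulus with uniformly bounded derivatives, so Banach--Steinhaus gives $|\langle t,(\chi_{\rho_{k-1}}-\chi_{\rho_k})f\rangle| \leq C\,\rho_{k-1}^{\,n-\delta}$ for any $\mathrm{sd}\,t < \delta < n$, which is summable. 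The same annular estimate also yields $\mathrm{sd}\,\bar t \leq \mathrm{sd}\,t$ and independence of $\chi$. With this step repaired, the rest of your argument (the Hadamard/Taylor reduction for $\mathrm{sd}\,t \geq n$, and the uniqueness count) is correct.
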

In the example presented at the beginning of this subsection, the scaling degree of $(\Delta_{S_0}^F)^2$ in 4 dimensions is 4, so the extension exists and is unique up to the addition of a multiple of the delta function.

The result above allows in principle to extend all the numerical distributions we need for the construction of time-ordered products. However, the computations can in general get very complicated, so it is convenient to formulate the combinatorics underlying our construction in terms of Feynman graphs. In the pAQFT framework, these are not fundamental objects, but instead they are derived (together with the corresponding Feynman rules) from time-ordered products.

 Time-ordered products $\TT^{\sst H}_n$ should be maps from $\Fcal_\loc^{\otimes n}$ to $\Fcal_{\mc}[[\hbar]]$ and, as indicated in the previous section, they are obtained by extending non-renormalized expressions that are originally defined only on $(\Fcal_\loc)^{\otimes n}_{\mathrm{pds}}$. Let us consider $F\equiv F_1\otimes\dots\otimes F_n\in(\Fcal_\loc)^{\otimes n}_{\mathrm{pds}}$ with the corresponding Wick-ordered quantities are elements of $\fA_{\loc}$ given by $A_1\doteq\TT F_1,\ldots,A_n\doteq\TT F_n\in\Fcal_{\loc}$.  Note that $F$ induces a map from $\Ecal^{n}$ to $\RR$ by $F(\ph_1,...,\ph_2)=F_1(\ph_1)\cdots F_n(\ph_n)$. When we talk about functionals on $\Ecal$ we will denote the variable by $\ph$ and for functionals on $\Ecal^{n}$ we take an $n$-tuple $(\ph_1,...,\ph_n)$.
 
Let us denote $D_{ij}\doteq i\hbar\langle\Delta_{S_0}^F,\frac{\delta^2}{\delta\ph_i\delta\ph_j}\rangle$ and $D\doteq i\hbar \langle\Delta_{S_0}^F,\frac{\delta^2}{\delta\ph^2}\rangle$. The Leibniz rule for differentiation can be formulated as
\be
\frac{\delta}{\delta\ph}\circ m_n=m_n\circ\left(\sum_{i=1}^n\frac{\delta}{\delta\ph_i}\right) \,,
\ee
where $m_n$ is the pointwise multiplication of $n$ arguments, or in other  words, a pullback through the diagonal map $\Ecal\rightarrow \Ecal^n$, $\ph\mapsto (\ph,\ldots,\ph)$. The Leibniz rule implies that the non-renormalized expression for $\TT^{\sst H}$ satisfies
\[
\TT^{\sst H}\circ m_n=e^{\frac{D}{2}}\circ m_n=m_n\circ e^{\sum_{i<j}D_{ij}+\sum_i \frac{1}{2}D_{ii}}\,,
\]
Hence
\[
F_1\cdot_{\TT^{\sst H}}\dots\cdot_{\TT^{\sst H}} F_n=e^{\frac{D}{2}}\circ m_n(e^{- \frac{1}{2}D_{11}}F_1,\dots,e^{- \frac{1}{2}D_{nn}}F_n)=m_n\circ e^{\sum_{i<j}D_{ij}}(F_1,\dots F_n)\equiv m_n\circ T_n(F_1,\dots F_n)\,.
\]
We can now use an identity
\be\label{expDij}
e^{\sum_{i<j}D_{ij}}=\prod_{i<j}\sum_{l_{ij}=0}^{\infty}\frac{D_{ij}^{l_{ij}}}{l_{ij}!}
\ee
to express time ordered products in terms of graphs.  Let $\Gcal_n$ be the set of all graphs with vertex set $V(\Gamma)=\{1,\dots n\}$ and $l_{ij}$ the number of lines $e\in E(\Gamma)$ connecting the vertices $i$ and $j$. We set $l_{ij}=l_{ji}$ for $i>j$ and $l_{ii}=0$. If $e$ connects $i$ and $j$ we set $\partial e:=\{i,j\}$. Then 
\be\label{time:ord}
T_n=\sum_{\Gamma\in \Gcal_n}T_{\Gamma}\,,
\ee
where
\be\label{GraphDO}
T_{\Gamma}=\frac{1}{\textrm{Sym}(\Gamma)}\langle t_{\Gamma},\delta_{\Gamma}\rangle\,,
\ee
with
\[\delta_{\Gamma}=\frac{\delta^{2\,|E(\Gamma)|}}{\prod_{i\in V(\Gamma)}\prod_{e:i\in\partial e}\delta\ph_i(x_{e,i})}\]
and
\be\label{SGamma}
t_{\Gamma}=\prod_{e\in E(\Gamma)}\hbar\Delta_F(x_{e,i},i\in\partial e)
\ee
The, so called, symmetry factor $\textrm{Sym}$ is the number of possible permutations of lines joining 
the same two vertices, $\textrm{Sym}(\Gamma)=\prod_{i<j}l_{ij}!$. Note that $T_{\Gamma}$ is a map from $(\Fcal_\loc)_{\mathrm{pds}}^{\otimes V}$ to $\Ci(\Ecal^{|V|},\RR)[[\hbar]]$, where $\otimes V$ means that the factors in the tensor product are numbered by vertices and to a vertex $v\in V(\Gamma)$ we assign the variable $\ph_v$. The renormalization problem is now the problem to extend $T_n$'s to maps on $(\Fcal_\loc)^{\otimes n}$ and this can be achieved by extending all the maps  $T_{\Gamma}$ and using formula \eqref{time:ord}.

First we note that  functional derivatives of local functionals are of the form
\be\label{dF0}
F^{(l)}(\ph)(x_1,\dots,x_l)=\int\sum_{j=1}^Ng_j[\ph](y)p_j(\partial_{x_1},\dots,\partial_{x_l})\prod_{i=1}^l\delta(y-x_i)d\mu(y)\,,
\ee
where $N\in\NN$, $p_j$'s are polynomials in partial derivatives and $g_j[\ph]$ are $\ph$-dependent test functions. The representation above is not unique, since some of the partial derivatives $\partial_{x_i}$ can be replaced with $\partial_y$ and applied to $g_j[\ph]$. Another representation of  $F^{(l)}(\ph)$ is obtained by performing the integral above and using the centre of mass and relative coordinates:
\be\label{dF}
F^{(l)}(\ph)(x_1,\dots,x_l)=\sum_\bet f_\bet[\ph](z)\partial^{\bet}\de(x^{\text{rel}})
\ee
where $\beta\in\NN_0^{4(l-1)}$, test functions 
$f_\bet[\ph](x)\in\Dcal$ are now $\ph$-dependent functions of the center of mass coordinate $z=(x_1+\dots+x_k)/k$ and $x^{\text{rel}}=(x_1-z,\dots, x_k-z)$ denotes the relative coordinates.

Using \eqref{dF0} we see that the functional differential operator $\delta_\Gamma$ applied to $F\in\mathcal{F}_{\mathrm{loc}}^{\otimes n}$ yields, at any $n$-tuple of field configurations $(\varphi_1,\dots,\varphi_n)$, a compactly supported distribution in the variables $x_{e,i},i\in\partial e, e\in E(\Gamma)$ with support on the partial diagonal $\Delta_{\Gamma}=\{x_{e,i}=x_{f,i},i\in\partial e\cap\partial f, e,f\in E(\Gamma)\}\subset \MM^{2|E(\Ga)|}$ and with a wavefront set perpendicular to $T\Delta_{\Gamma}$.  Note that the partial diagonal $\Delta_{\Gamma}$ can be parametrized using the center of mass coordinates
\[
z_v\doteq \frac{1}{\textrm{valence}(v)}\sum_{e|v\in\partial e} x_{e,v}\,,
\]
assigned to each vertex. The remaining relative coordinates are $x_{e,v}^{\text{rel}}=x_{e,v}-z_v$, where $v\in V(\Ga)$, $e\in E(\Ga)$ and $v\in\partial e$. Obviously, we have $\sum_{e|v\in\partial e} x_{e,v}^{\text{rel}}=0$ for all $v\in V(\Ga)$. In this parametrization $\delta_\Gamma F$ can be written as a finite sum
\[
\delta_\Gamma F=\sum_{\textrm{finite}}f^\beta\partial_\beta\delta_{\textrm{rel}}\,,
\]
where $\beta\in\NN_0^{4|V(\Gamma)|}$, each $f^{\beta}(\ph_1,...,\ph_n)$ is a test function on $\Delta_{\Gamma}$ and $\delta_{\textrm{rel}}$ is the Dirac delta distribution in relative coordinates, i.e. $\delta_{\textrm{rel}}(g)=g(0,\ldots,0)$, where $g$ is a function of $(x_{e,v}^{\textrm{rel}},v\in V(\Ga), e\in E(\Ga))$.

 We can simplify our notation even further. Let $Y_\Gamma$ denote the vector space spanned by derivatives of the Dirac delta distributions $\partial_\beta\delta_{\textrm{rel}}$, where $\beta\in\NN_0^{4|V(\Gamma)|}$. Obviously, $Y_\Gamma$ is graded by $|\beta|$. Let $\mathcal{D}(\Delta_\Gamma, Y_\Gamma)$ denote the graded space of test functions on $\Delta_\Gamma$ with values in $Y_\Gamma$. With this notation we have $\delta_\Gamma F\in\mathcal{D}(\Delta_\Gamma, Y_\Gamma)$ and if $F\in(\mathcal{F}_{\mathrm{loc}})_{\mathrm{pds}}^{\otimes n}$, then $\delta_\Gamma F$ is supported on  $\Delta_{\Gamma}\setminus\DIAG$, where $\DIAG$ is the large diagonal:
 \[
\DIAG=\left\{ z\in\Delta_{\Gamma}|\,\exists v,w\in V(\Gamma),v\neq w:\, z_{v}=z_{w}\right\} \,.
\]
We can now write \eqref{GraphDO}  in the form
\[
\frac{1}{\textrm{Sym}(\Gamma)}\langle t_{\Gamma},\delta_{\Gamma}\rangle=\sum_{\textrm{finite}}\left<f^\beta\partial_\beta\delta_{\textrm{rel}},t_\Gamma\right>
\]
where $t_{\Gamma}$ is now written in terms of centre of mass and relative coordinates. To see that this expression is well defined, note that we can move all the partial derivatives $\partial_\beta$ to $t_{\Ga}$ by formal partial integration. Then the contraction with $\delta_{\textrm{rel}}$ is just the pullback through the diagonal map 
map $\rho_{\Ga}:\Delta_\Gamma\rightarrow\MM^{2|E(\Ga)|}$ by
\[
(\rho_{\Ga}(z))_{e,v}=z_v\,\quad\mathrm{if}\,v\in\partial e\,.
\]
From the wavefront set properties of $\Delta^F_{S_0}$, we deduce that the pullback $\rho_{\Ga}^*$ of each $ t_{\Ga}^\beta\doteq\partial_\beta t_{\Ga}$ is a well defined distribution on
$\Delta_\Ga\backslash\DIAG$, so  \eqref{GraphDO} makes sense if $F\in(\mathcal{F}_{\mathrm{loc}})_{\mathrm{pds}}^{\otimes n}$, as expected. 
We conclude that $t_{\Ga}\in \mathcal{D}'(\Delta_\Gamma\backslash\DIAG, Y_\Gamma)$, where the duality between $t_{\Ga}$ and a test function $f=\sum_{\textrm{finite}}f^\beta\partial_\beta\delta$ is given by
\[
\langle t_{\Gamma},f\rangle\doteq\sum_\beta\langle t^\beta_\Gamma,f_\beta\rangle\,.
\]
The renormalization problem now reduces to finding the extensions of $ t^\beta_\Gamma$, so that $t^\beta_\Gamma$ gets extended to an element of $\mathcal{D}'(\Delta_\Gamma, Y_\Gamma)$. The solution to this problem is obtained by using the inductive procedure of Epstein and Glaser. The induction step works as follows: if 
$t_{\Gamma'}$ is known for all graphs $\Gamma'$ with fewer vertices than $\Gamma$, then $t_\Gamma$ can be uniquely defined for all \textit{disconnected}, all \textit{connected one particle reducible} and all \textit{one particle irreducible one vertex reducible graphs}. Graphs which are irreducible and do not contain any non-trivial   irreducible subgraphs are called \textit{EG-primitive}. For the remaining graphs, called \textit{EG-irreducible}, $t_{\Gamma}$ is defined uniquely on all $f\in\mathcal{D}(\Delta_{\Gamma},Y_\Gamma)$ of the form above where $f_\beta$ vanishes together with all its derivatives of order  $\leq\omega_\Gamma+|\beta|$ on the thin diagonal of $\Delta_\Gamma$. Here
\[\omega_\Gamma=(d-2)|E(\Gamma)|-d(|V(\Gamma)|-1)\]
is the degree of divergence of the graph $\Gamma$. We  denote this subspace by $\mathcal{D}_{\omega_{\Gamma}}(\Delta_\Gamma,Y_\Gamma)$. Graphs which are irreducible and do not contain any non-trivial   irreducible subgraphs are called \textit{EG-primitive}.
 Renormalization amounts to project a generic $f$ to this subspace by a translation invariant projection $W_{\Gamma}:\mathcal{D}(\Delta_\Gamma,Y_\Gamma)\to\mathcal{D}_{\omega_\Gamma}(\Delta_\Gamma,Y_\Gamma)$. Different renormalization schemes differ by different choices of the projections $W_\Gamma$ (see \cite{DFKR14} for details). 

On Minkowski spacetime we have further simplifications. By exploiting the translation invariance we find that, at each step of the recursive construction of time-ordered products, the renormalization problem reduces to the problem of extension of some distribution defined everywhere outside the origin, so this is what we will focus on now.

For concrete computations it is convenient to construct these extensions with the use of \textit{regularization}.
Let us first define the notion of a \textit{regularization of a distribution}. Let $\tilde{t}\in\Dcal'(\RR^d\setminus\{0\})$, $d\in\NN$, be a distribution with degree of 
divergence $\omega$, and by $\bar{t}\in\Dcal_\omega'(\RR^d)$ we denote the unique extension of $\tilde{t}$ with the same degree of divergence. A family of distributions $\{t^\zeta\}_{\zeta\in\Omega\setminus\{0\}}$, $t^\zeta\in\Dcal'(\RR^d)$, with $\Omega\subset\CC$ a neighborhood of the origin, is called a regularization of $\tilde{t}$, if
\be\label{eq:regularization}
\forall g\in\Dcal_\lambda(\RR^d):\quad\lim_{\zeta\rightarrow0}\langle t^\zeta,g\rangle=\langle \bar{t},g\rangle\,.
\ee
We say that the regularization $\{t^\zeta\}$ is called analytic, if for all functions $f\in\Dcal(\RR^n)$ the map
\be
\Omega\setminus\{0\}\ni\zeta\mapsto \langle t^\zeta,f \rangle
\ee
is analytic with a pole of finite order at the origin. The regularization $\{t^\zeta\}$ is called finite, if 
the limit $\lim_{\zeta\rightarrow 0}\langle t^\zeta,f\rangle\in\CC$ exists $\forall f\in\Dcal(\RR^d)$.

For a finite regularization the limit
$\lim_{\zeta\rightarrow0}t^\zeta$ is, as expected, a solution $t$ of the
extension (renormalization) problem. Given a regularization $\{t^\zeta\}$ of $t$, it follows from \eqref{eq:regularization} 
that for any projection $W:\Dcal\rightarrow\Dcal_\omega$
\be\label{regW-1}
\langle \bar{t},Wf\rangle=\lim_{\zeta\rightarrow0}\langle t^\zeta,Wf\rangle\, \quad \forall f\in\Dcal(\RR^n)\,.
\ee
It was shown in \cite{DF04} that any extension $t\in\Dcal'(\RR^d)$ of $\tilde t$ with the same scaling degree is of the form $\langle t,f\rangle=\langle \bar t,Wf\rangle$ with some $W$-projection of the form 
\be\label{W:proj}
Wf:=f-\sum_{\left|\alpha\right|\leq\lambda}f^{\left(\alpha\right)}(0)\, w_{\alpha}\,,
\ee
where  $w_{\alpha}\in\Dcal(\RR^d)$ such that for all  multiindices $\beta\in\NN_0^d$ with $\left|\beta\right|\leq\omega$ we have $\partial^{\beta}w_{\alpha}(0)=\delta_{\alpha}^{\beta},\,\left|\alpha\right|\leq\omega$
Hence
\be\label{regW-2}
\langle \bar{t},Wf\rangle=\lim_{\zeta\rightarrow0}\left[\langle t^\zeta,f\rangle - \sum_{|\al|\leq\sd(t)-n}\langle t^\zeta,w_\al\rangle\; f^{(\al)}(0)\right].
\ee
In general, we cannot split the limit on the right hand side
into two well defined terms. However, if the regularization $\{t^\zeta,\zeta\in\Omega\setminus\{0\}\}$ is analytic, then we can expand each term into a Laurent series around $\zeta=0$, and because the overall limit is finite, the principal parts ($\pp$) of these two Laurent series must be the same.
This means that the principal part of any analytic regularization $\{t^\zeta\}$ of a distribution $t\in\Dcal'(\RR^d\setminus\{0\})$ is a local distribution of order $\sd(t)-d$. Following \cite{DFKR14}, we can now give a definition of the minimal subtraction in the EG framework.
\begin{definition}[Minimal Subtraction]\label{cor:MS-same-sd}
The regular part ($\rp=1-\pp$) of any analytic regularization $\{t^\zeta\}$ of a distribution $\tilde{t}\in\Dcal'(\RR^d\setminus\{0\})$ defines by
\be\label{def:MS}
\langle t^\MS,f\rangle :=\lim_{\zeta\rightarrow0} \rp(\langle t^\zeta,f\rangle)
\ee
an extension of $\tilde{t}$ with the same scaling degree, $\sd(t^\MS)=\sd(\tilde{t})$.
The extension $t^\MS$ defined by (\ref{def:MS}) is called the ``minimal subtraction''.
\end{definition}
\subsection{Interacting theories}
\label{sec:Time ordered products}
Let us now discuss the problem of constructing interacting nets of observables. We start from a space $\cD^n$ of functions $f:\MM\to\RR^n$ with compact support. We assume that we have unitaries {$S(f)$}, $f\in\cD^n$
 with $S(0)=0$, which generate a *-subalgebra {$\tilde{\fA}$} of {$\fA$} and satisfy for $f,g,h\in\cD$  Bogoliubov's factorization relation
\begin{equation}\nonumber
{S(f+g+h)=S(f+g)S(g)^{-1}S(g+h)} 
\end{equation}
if the past {$J_-$} of {$\supp h$} does not intersect  {$\supp f$}
(or, equivalently, if the future {$J_+$} of {$\supp f$} does not intersect  {$\supp h$}). We can obtain these as formal S-matrices $S(f)\doteq\Scal(V(f))$, discussed in the previous section (see property \eqref{factorisation}), for a generalized Lagrangian $V(f)=\al_{\sst H}\left(\sum_{j=1}^n \int A_j(x)f^j(x)d\mu(x)\right)$, where $f\in\Dcal^n$ and each $A_j(x)$ is a local function $\ph\in\Ecal$. Typically $A_j'$ are polynomial and they represent Lagrangian densities of various interaction terms that one can add to the free action $S_0$.

We also assume that the translation group of Minkowski space acts by automorphisms $\al_x$  on $\tilde{\fA}$ such that
\begin{equation}\nonumber
\al_x(S(f))=S(f_x)\ ,\ f_x(y)=f(y-x)\ .
\end{equation}
Obviously, this is also satisfied for the S-matrices discussed so far. 
Under these general assumptions, we define local algebras {$\fA(\cO)$}, $\Ocal\subset\MM$, as the *-subalgebras of $\fA$ generated by $S(f),\supp f\subset\cO$ 
and obtain a translation covariant Haag-Kastler net on Minkowski space. To justify this claim, we will now check that all the axioms are satisfied.

{\em Isotony} and {\em Covariance} are obvious, 
and {\em Locality} follows from the fact that 
for functions $f,g$ with spacelike separated supports
\begin{equation}
\supp f\cap J_\pm(\supp g)=\varnothing
\end{equation}
and hence 
\begin{equation}
S(f)S(g)=S(f+g)=S(g)S(f)\ .
\end{equation}

The crucial observation is now that  the map $f\mapsto S(f)$ induces a large family of objects that satisfy Bogoliubov's factorisation relation, which are labeled by test functions $g\in\cD^n$, namely the {\em relative S-matrices}
\begin{equation}\nonumber
{f\mapsto S_g(f)=S(g)^{-1}S(g+f)}\ .
\end{equation}
We can choose $A_0(x)=\cL_I(x)$ to be the Lagrangian density of the interaction term. Then, for $g=(g_0,0,\ldots,0)$, we obtain $V(g)=\int \cL_Ig_0d\mu\equiv L_I(g_0)$, where $g_0\in\Dcal$.
Note that $S(g+\la f)=\Scal(\al_{\sst H}( \cL_I(g_0)+\la\sum_j\int A_jf_jd\mu))$, so the derivative of $S$ with respect to $\la$ is just the retarded field $R_{ \cL_I(g_0)}(V(f))$. Let us now prove that the causal factorisation property indeed holds for $S_g(f)$.
 Let $f,h\in \cD^n$ such that $\supp f$ does not intersect $J_-(\supp h)$. Let $g,g'\in\cD^n$. Then
\begin{align*}
S_g(f+g'+h)&= S(g)^{-1}S(f+(g+g')+h)\\
&= S(g)^{-1}S(f+(g+g'))S(g+g')^{-1}S((g+g')+h)\\
&= S_g(f+g')S_g(g')^{-1}\underbrace{S(g)^{-1}S(g)}_{=1}S_g(g'+h)\ .\ \ \ \square
\end{align*}

We consider {$S_g(f)$} as the retarded observable $S(f)$ under the influence of the interaction  $L_I(g_0)$. The Haag-Kastler net {$\fA_g$} of the interacting theory is then defined by
the local algebras {$\fA_g(\cO)$} which are generated by the relative S-matrices $S_g(f),\supp f\subset \cO$. These can indeed be interpreted as retarded observables, as $S_g(f)$ depends only on the behavior of $g$ in the {past} of {$\supp f$}. More precisely, $\supp (g-g')\cap J_-(\supp f)=\varnothing$ implies
\begin{align*}S_g(f)&=S(g)^{-1}S((g-g')+g'+f)=\\
&=S(g)^{-1}S((g-g')+g')S(g')^{-1}S(g'+f)=S_{g'}(f)\ .
\end{align*}
The second observation is that $S_g(f)$ depends on the behavior of $g$ outside of the {future} of {$\supp f$} via a (formal) unitary transformation which does not depend on $f$. Namely,
$\supp (g-g')\cap J_+(\supp f)=\varnothing$ implies
\begin{align*}
S_g(f)&=S(g)^{-1}S(f+g'+(g-g'))=\\
&=S(g)^{-1}S(f+g')S(g')^{-1}S(g'+(g-g'))=\\
&=S(g)^{-1}S(g')S(g')^{-1}S(f+g')S_{g'}(g-g')=\\
&=\mathrm{Ad} S_{g'}(g-g')^{-1}(S_{g'}(f))\ .
\end{align*}
Hence the structure of local algebras depends only locally on the interaction. This allows to perform the  
adiabatic limit directly on the level of local algebras.

In the next step we want to remove the restriction to interactions with compact support. Let $G:\MM\to\RR^n$ be smooth and $\cO$ be bounded. Set
\begin{equation}\nonumber
[G]_\cO=\{g\in\cD^n|g\equiv G \text{ on a neighborhood of }J_+(\cO)\cap J_-(\cO)\} \ .
\end{equation}
We consider the $\tilde{\fA}$-valued maps
\begin{equation}\nonumber
S_{G,{\cO}}(f):[G]_{\cO}\ni g\mapsto S_g(f)\in\tilde{\fA}\ .
\end{equation}
The local algebra {$\fA_{G}(\cO)$} is defined to be the algebra generated by {$S_{G,{\cO}}(f),\supp f\subset \cO$}. Note that the evaluation maps
\begin{equation}\nonumber
\gamma_{gG}:S_{G,{\cO}}(f)\to S_g(f)
\end{equation}
extend to isomorphisms of $\fA_{G}(\cO)$ and $\fA_g(\cO)$ for every $g\in[G]_{\cO}$.

The local net is now defined by  the embeddings $i_{\cO_2\cO_1}$ for $\cO_1\subset\cO_2$
\begin{equation}\nonumber
i_{\cO_2\cO_1}:S_{G,{\cO_1}}(f)\mapsto S_{G,{\cO_2}}(f)
\end{equation}
for $f\in\cD^n$ with $\supp f\subset \cO_1$.
Let {$\fA_G$} be the inductive limit with embeddings
\begin{equation}\nonumber
i_{\cO}:\fA_{G}(\cO)\to\fA_G
\end{equation}
and we set {$$S_G(f)=i_{\cO}(S_{G,{\cO}}(f)).$$}
We are now ready to prove a crucial theorem about the net $\Ocal\mapsto\fA_{G}(\cO)$.
\begin{theorem} Let $G$ be translation invariant. Then the net becomes translation covariant by setting
\begin{equation}\nonumber
\alpha^G_x(S_G(f))=S_G(f_x)\ .
\end{equation}
\end{theorem}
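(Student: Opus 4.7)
The plan is to trace through the construction of $\fA_G$ and show that translation by $x$ induces, at each stage, a compatible map, so that it descends to a well-defined automorphism on the inductive limit satisfying the claimed covariance relation.

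First I would verify that translation acts naturally on the indexing sets $[G]_\cO$. Since translations are causal isomorphisms, $J_\pm(\cO+x) = J_\pm(\cO)+x$, so a neighborhood $U$ of $J_+(\cO)\cap J_-(\cO)$ is translated to a neighborhood $U+x$ of $J_+(\cO+x)\cap J_-(\cO+x)$. Moreover, if $g \equiv G$ on $U$ and $G$ is translation invariant (so $G(y-x) = G(y)$), then $g_x(y) = g(y-x) = G(y-x) = G(y)$ for $y \in U+x$. Hence $g \mapsto g_x$ is a bijection $[G]_\cO \to [G]_{\cO+x}$.

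Next I would compute the action of $\alpha_x$ on a relative S-matrix. Using the given covariance $\alpha_x(S(h)) = S(h_x)$ and the fact that $(g+f)_x = g_x + f_x$,
\begin{equation*}
\alpha_x\bigl(S_g(f)\bigr) = \alpha_x\bigl(S(g)^{-1} S(g+f)\bigr) = S(g_x)^{-1} S(g_x+f_x) = S_{g_x}(f_x).
\end{equation*}
Viewed at the level of the evaluation-map generators, this means that for $g \in [G]_\cO$,
\begin{equation*}
\alpha_x\bigl(S_{G,\cO}(f)(g)\bigr) = S_{G,\cO+x}(f_x)(g_x).
\end{equation*}
Because the bijection $g \mapsto g_x$ identifies $[G]_\cO$ with $[G]_{\cO+x}$, this formula defines an isomorphism of the generating sets and, since $\alpha_x$ is a $*$-automorphism of $\tilde{\fA}$, extends to a $*$-isomorphism $\alpha^G_x: \fA_G(\cO) \to \fA_G(\cO+x)$ determined by $\alpha^G_x(S_{G,\cO}(f)) = S_{G,\cO+x}(f_x)$.

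Finally I would verify compatibility with the embeddings $i_{\cO_2\cO_1}$ of the inductive system: for $\supp f \subset \cO_1 \subset \cO_2$, both $i_{\cO_2+x,\cO_1+x} \circ \alpha^G_x$ and $\alpha^G_x \circ i_{\cO_2\cO_1}$ send $S_{G,\cO_1}(f)$ to $S_{G,\cO_2+x}(f_x)$. Thus the family $\{\alpha^G_x\}$ is compatible with the inductive system and induces an automorphism of $\fA_G$ satisfying
\begin{equation*}
\alpha^G_x(S_G(f)) = \alpha^G_x\bigl(i_\cO(S_{G,\cO}(f))\bigr) = i_{\cO+x}\bigl(S_{G,\cO+x}(f_x)\bigr) = S_G(f_x).
\end{equation*}
The group law $\alpha^G_x \circ \alpha^G_y = \alpha^G_{x+y}$ follows from the corresponding law for $\alpha_x$ on $\tilde{\fA}$. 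The only genuine subtlety, and hence the main place to be careful, is the translation invariance step: it is precisely the hypothesis $G_x = G$ that makes $g \mapsto g_x$ preserve the class $[G]_{(\cdot)}$, without which $\alpha_x$ would move $\fA_G(\cO)$ into the local algebra of a \emph{different} background, not of the translated region.
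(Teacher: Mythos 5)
Your argument is correct, and it uses the same essential insight as the paper (translation invariance of $G$ is precisely what makes $g\mapsto g_x$ a bijection $[G]_\cO\to[G]_{\cO+x}$, hence preserves the class of allowed cutoffs), but it constructs $\al^G_x$ by a different route. You define $\al^G_x$ directly at the level of $\tilde\fA$-valued maps on $[G]_\cO$, essentially as $F\mapsto \al_x\circ F\circ(\,\cdot\,)_{-x}$, and then check that this sends generators $S_{G,\cO}(f)$ to $S_{G,\cO+x}(f_x)$ and is compatible with the inductive system --- a clean, economical argument. The paper instead fixes a single $g\in[G]_{\cO_1}$ with $\cO_1$ large enough that both $g$ and $g_x$ lie in $[G]_\cO$, writes $g_x=g+h^x_++h^x_-$ with $\supp h^x_\pm\cap J_\mp(\cO)=\varnothing$, and uses the previously established retardation and unitary-equivalence properties of the relative S-matrices to obtain the explicit formula $\al^G_x=\gamma_{gG}^{-1}\circ\mathrm{Ad}\,U_g(x)\circ\al_x\circ\gamma_{gG}$ with $U_g(x)=S_g(h^x_-)$. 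The two approaches buy different things: yours isolates exactly what is needed to prove translation covariance and makes the role of the hypothesis transparent; the paper's is more laborious but exhibits, in a concrete free-theory realization $\fA_g(\cO)$, the interacting translations as the free translations twisted by a unitary cocycle $U_g(x)$, which is the structure exploited afterwards in the Hamiltonian construction (Section~\ref{Section:Hamiltonian}). So your proof is not deficient, but it does not produce the cocycle data the paper needs later.
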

\begin{proof}
We have to prove that $\al^G_x$ extends to an isomorphism from $\fA_G(\cO)\to\fA_G(\cO+x)$. Let $\cO_1\supset \cO\cup\cO-x$ and $g\in[G]_{\cO_1}$. 
Then $g,g_x\in[G]_{\cO}$ and $g_x=g+h^x_++h^x_-$ with $\supp h^x_\pm\cap J_\mp(\cO)=\varnothing$.
 By causal factorization
$$\al^G_x=\gamma_{gG}^{-1}\circ\mathrm{Ad}U_g(x)\circ\al_x\circ\gamma_{gG}$$
with
$U_g(x)=S_g(h^x_-)$. \qed
\end{proof}

\section{Time-slice axiom, operator product expansions, and the renormalization group}\label{sec::Time-slice}
We have seen that, starting from a free QFT and a definition of a time ordered product satisfying the axioms of Section \ref{sec:Time ordered products} we can construct a local net (in the sense of formal power series) satisfying the Haag-Kastler axioms of Isotony, Locality and Covariance. In this section we want to analyze the net in more detail.

First we investigate whether the net satisfies the time-slice axiom. This can be done for the case that the net is defined on a generic Lorentzian globally hyperbolic spacetime $M$. It is known since a long time \cite{FNW81} that the free theory generated by linear functionals, modulo the ideal of the free field equation, satisfies this axiom, and by using the techniques of microlocal analysis, this result can be extended to the net $\fA$ generated by elements of the form $\al_H^{-1}F$, where $F\in\Fcal_{\mc}$ is a microcausal functional \cite{HW01,CF08}. In \cite{CF08} it was  shown that this implies that also the net $\fA_G$ introduced in the previous section satisfies the axiom. The argument relies on the fact that the algebra of the interacting theory associated to some bounded region can be constructed as a subalgebra of the free theory for a slightly larger region, and vice versa.

The problem is that these subalgebras are fixed only up to unitary equivalence, so one has in addition to show that these unitary transformations can be appropriately fixed. We use the fact that the relative S-matrices $S_g(f)$ are well defined also for test functions $g$ with non-compact support provided the support is past compact, i.e. $\supp g\cap J_-(x)$ is compact for all $x\in M$. 

Let $\Sigma$ be a Cauchy surface of $M$ and $N$ a neighborhood of $\Sigma$. Let $\cO\subset M$ be relatively compact. We choose a Cauchy surface $\Sigma_-$ such that $\cO\cup N\subset J_+(\Sigma_-)$ and a smooth function $\chi$ with past compact support such that $\supp (1-\chi)\subset J_-(\Sigma_-)$. We want to prove that $\fA_{G\chi}(\cO)\subset\fA_{G\chi}(N)$. 

By construction of the interacting theory we see immediately that $\fA_{G\chi}(\cO)\subset \fA(M)$ holds. Due to the time slice property of the free theory, $\fA(M)=\fA(N')$ for each neighborhood $N'$ of $\Sigma$. We now construct within the algebra $\fA_{G\chi}(N)$ an algebra which is isomorphic to $\fA(N')$ for a sufficiently small $\Sigma\subset N'\subset N$.
For this purpose we choose another smooth function $\chi'$ with support contained in $J_+(N)$ and with $\supp (1-\chi')\subset J_-(N')$. Let now $\supp f\subset N'$. Then the unitaries
\begin{equation}
S_{G(\chi-\chi')}(f)=S_{G\chi}(g')^{-1}S_{G\chi}(g'+f)\ ,\  {\rm with}\ g'\equiv G\chi'\ {\rm on}\ J_-(\supp f), \supp g'\subset N\ , 
\end{equation}
generate an algebra isomorphic to $\fA(N')$ within $\fA_{G\chi}(N)$. The map
\begin{equation}
\alpha:S(f)\to S_{G(\chi-\chi')}(f)=\mathrm{Ad}(S(g-g'))^{-1}(S(f))
\end{equation}
with $g\equiv G\chi$ on $J_-(\supp f\cup\supp g')$ extends to an injective homomorphism from $\fA(M)$ into $\fA_{G\chi}(N)$. Since $\fA_{G\chi}(N)\subset \fA(M)$, $\alpha$ is an endomorphism of $\fA(M)$. We show that it is even an automorphism. For this purpose we construct the inverse of $\alpha$. By exploiting the time slice property of the free theory, we  
can restrict ourselves to elements $S(f)$ with $\supp f\subset J_-(\Sigma_0)$. On these elements we have
\begin{equation}
\alpha^{-1}(S(f))=\mathrm{Ad}(S(g-g'))(S(f))=S(g-g'+f)S(g-g')^{-1}
\end{equation}
where $g-g'\equiv G(\chi-\chi')$ on $J_+(\supp f)$. We conclude that $\fA_{G\chi}(N)=\fA(M)$. This proves the claim.


Another general property of the interacting net is the existence of an operator product expansion \cite{Hollands07}. In the case of the product of two fields $A$ and $B$ it is an expansion
\begin{equation}
A(x)B(y)\sim\sum_k C_{AB}^k(x,y)\ph_k(x)
\end{equation}
with distributions $C_{AB}^k$ and a basis of local fields $\ph_k$, ordered with respect to the scaling dimension.  This is an asymptotic expansion in the sense that after evaluation in a state coming from a Hadamard state of the free field, the difference between the right hand side of the relation and the left hand side, truncated at some $k$,   tends to zero as $x\to y$, with  an order depending on  $k$.  

The third property we look at is the behavior of the theory at different scales. In the standard formalism of QFT, one formulates this as a property of vacuum expectation values of products or time ordered products of fields, or one uses the concept of the so-called effective action. In this formulation one has to have control over the existence and uniqueness of the vacuum state. In the algebraic approach one can instead derive a relation between local nets. Namely given a local net $\cO\mapsto\fA_1(\cO)$ one obtains another net by scaling the regions,
\begin{equation}
\fA_{\lambda}(\cO)=\fA_{1}(\lambda\cO)\ .
\end{equation}
If the net depends on some parameters $(m,g)$, one can compensate the scaling by changing the parameters. One obtains
the algebraic Callan-Symanzik equation \cite{BDF09}
\begin{equation}
\fA_{\lambda}^{m,g}\cong\fA_1^{m(\lambda),g(\lambda)}
\end{equation}
The ``running'' of the parameters is as usual determined by the renormalization group equation which follows from the behavior of the time ordered product under scaling.

\section{Hamiltonian formalism for quantum field theory, and the construction of states}\label{Section:Hamiltonian}
Up to now we remained in the realm of algebras. There we could study several structural properties of the theory. In order to get more detailed predictions of the theory one has to evaluate the algebra in specific states.
A class of states on the local algebras can be obtained in terms of the states of the free theory by embedding the interacting theory into the free one, but this is highly ambiguous and gives no direct interpretation of the states. Conceptually, one does not need more, since the interpretation can be done in terms of the expectation values of observables. In practice, however, one would prefer to have states with an a priori interpretation as e.g. the vacuum state. The standard way to compute it is the evaluation of the product or the time ordered product of interacting fields with an interaction $L_I(g_0)=\int \cL_I(x)g_0(x)d\mu(x)$ in the vacuum state of the free theory and performing the adiabatic limit $g_0\to 1$. This limit is well behaved in massive theories, but exists also for a suitable sequence $(g_0)_n\to 1$ in certain massless theories such as massless $\ph^4$ or QED. In the case of time ordered products one just reproduces the standard formulas in terms of Feynman graphs; in the case of operator products one has to use Steinmann's sector graphs \cite{Steinmann92}. The adiabatic limit in this form, however, does not always exist, in particular not for states with nonzero temperature. 

A more direct way of constructing states with specific properties could be imagined in a Hamiltonian formalism, as well known from nonrelativistic quantum mechanics. The difficulty is that the interaction Hamiltonian for a local QFT is very singular so that perturbation theory for selfadjoint operators cannot be used. There are two independent reasons for the singular character of perturbations in QFT. The first is translation symmetry. In Minkowski space this leads to Haag's theorem, which states that the ground state of the interacting theory cannot be represented by a vector in the Fock space of the free theory. If one takes this into account by restricting the interaction to a finitely extended spatial region, one can indeed apply the perturbation theory of selfadjoint operators in certain superrenormalizable models in 2 dimensions. One can then construct ground states and consider their limit if the cutoff is removed. In 4 dimensions, however, the local interaction densities are too singular, so that also the spatially restricted interaction is not an operator. 

The Hamiltonian formalism relies on a split of spacetime into the product of a  Cauchy surface and the time axis, and all the observables of the theory are constructed in terms of their initial values on this surface, which are supposed to be independent of the interaction. But from renormalization theory it is well known that in general one has to expect modifications of the canonical structure; moreover, even for free fields, the restriction to a Cauchy surface is singular for all nonlinear local fields. 

Instead we use the fact that for generic perturbative QFT's the time-slice axiom holds. Moreover, as we saw from the discussion of the proof of this fact, the free and the interacting algebra of a time slice can be identified. This suggests to compare their time evolutions. Both are automorphism groups acting on the same algebra, and they differ by a cocycle. In case of a spatial cutoff of the interaction, the cocycle is implemented by a unitary cocycle within the algebra, whose generator is an integral over an operator valued function which may be interpreted as a regularized interaction Hamiltonian density $\cH_I(\x)$.  

As in section \ref{sec:Time ordered products} we consider the space $\Dcal^n$ of test functions and the algebra generated by $S(f)=\Scal(\al^{-1}_{\sst H}(\sum_i\int A_if^i d\mu))$. We also assume that $A_0=\cL_I$ is the interaction Lagrangian density. The time slice property proven in section \ref{sec::Time-slice} induces isomorphisms between the free and the interacting algebras. Let $\chi$ be a smooth function of time $t$ with $\chi(t)=1$ for $t>-\ep$ and $\chi(t)=0$ for $t\le -2\ep$. Then $\supp( (t,\x)\mapsto G(t,\x)\chi(t))$
is past compact. We now define a map from $\fA_G$ to $\fA$ by
\begin{equation}\nonumber
\gamma_{\chi}(S_G(f))=S_{G\chi}(f)\ ,\ \supp f\subset (-\ep,\ep)\times \RR^3\ .
\end{equation}
Due to the time slice property this map extends to an isomorphism. Moreover, it only slightly changes the kinematical localization at $t=0$. Let $\cO_r=\{(t,\x)||t|+|\x|<r\}$. Then
\begin{equation}\nonumber
\gamma_{\chi}(\fA_G(\cO_r))\subset\fA(\cO_{r+4\ep})\subset\gamma_{\chi}(\fA_G(\cO_{r+8\ep}))\ .
\end{equation}

Let $G$ be constant, 
let $\al_x^{G,\chi}=\gamma_{\chi}\circ\al_x^G\circ\gamma_{\chi^{-1}}$ be the translations of the interacting theory mapped to the free theory, and consider the cocycle $\beta^{G,\chi}_x=\alpha_x^{G,\chi}\circ\alpha_{-x}$. We find $\beta^{G,\chi}_{(0,\x)}=\mathrm{id}$ and, for $f$ with $\supp f\subset\cO_r$ and small $t$,
\begin{equation}\nonumber
\beta^{G,\chi}_{(t,0)}(S(f))=\mathrm{Ad}S_{h\chi}(h(\chi_t-\chi))(S(f))
\end{equation}
where $h$ is time independent, has compact spatial support and $h\equiv G$ on $\cO_{r+4\ep}$.
\begin{proposition}
The unitaries $U_t^{h\chi}=S_{h\chi}(h(\chi_t-\chi))$ fulfill the cocycle equation
\begin{equation}\nonumber
U_{t+s}^{h\chi}=U_t^{h\chi}\alpha_t(U_s^{h\chi})
\end{equation}
\end{proposition}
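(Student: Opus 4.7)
The plan is to reduce the cocycle identity to the definition of the relative S-matrix combined with the translation covariance of the free theory. First, using $S_g(f) = S(g)^{-1} S(g+f)$ and the elementary identity $h(\chi_t - \chi) = h\chi_t - h\chi$, I rewrite
\[
U_t^{h\chi} = S_{h\chi}(h\chi_t - h\chi) = S(h\chi)^{-1} S(h\chi_t),
\]
and similarly $U_s^{h\chi} = S(h\chi)^{-1} S(h\chi_s)$ and $U_{t+s}^{h\chi} = S(h\chi)^{-1} S(h\chi_{t+s})$. These expressions are legitimate because $h\chi$ has past-compact support (since $h$ is time-independent with compact spatial support and $1-\chi$ is supported in the far past), so $S(h\chi)$ and all its time-translates are well-defined unitaries in the extended sense used in Section \ref{sec::Time-slice}.

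Next I apply translation covariance. Since $h$ does not depend on time, the translate of $h\chi$ by $(t,0)$ equals $h\chi_t$, and more generally $(h\chi_s)_{(t,0)} = h\chi_{s+t}$. Using $\al_x(S(f)) = S(f_x)$ together with the fact that $\al_t$ is an automorphism (so it commutes with taking inverses), I obtain
\[
\al_t\bigl(U_s^{h\chi}\bigr) = S(h\chi_t)^{-1} S(h\chi_{t+s}).
\]
Multiplying and telescoping the middle factor $S(h\chi_t)\, S(h\chi_t)^{-1} = \1$ yields
\[
U_t^{h\chi}\, \al_t\bigl(U_s^{h\chi}\bigr) = S(h\chi)^{-1} S(h\chi_t)\, S(h\chi_t)^{-1} S(h\chi_{t+s}) = S(h\chi)^{-1} S(h\chi_{t+s}) = U_{t+s}^{h\chi},
\]
which is the required cocycle equation.

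The only genuine subtlety is the very first step: the formal S-matrices $S(f)$ were originally introduced for compactly supported $f$, so one must rely on the extension of $S$ and of the covariance $\al_x$ to past-compact arguments---the same extension already exploited in the construction of $\al^{G,\chi}$ preceding the statement. Once this extension is granted, the argument is purely algebraic: it uses only the definition of $S_g$ and the covariance of the underlying free theory, with no further input from time-ordering, renormalization, or the specific choice of interaction.
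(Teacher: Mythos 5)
Your telescoping idea is correct in spirit and is indeed the same algebraic mechanism the paper uses, but there is a genuine gap in the very first step which you flag and then wave away. You factor
\[
U_t^{h\chi}=S_{h\chi}(h(\chi_t-\chi))=S(h\chi)^{-1}S(h\chi_t),
\]
and this requires $S(h\chi)$ to be a well-defined element of the algebra. But $h\chi$ is \emph{not} compactly supported (its support extends to all future times), and the paper never extends the map $f\mapsto S(f)$ to such arguments. What Section 5 extends is only the \emph{relative} S-matrices $S_g(f)$, for $f$ compactly supported and $g$ with past-compact support: the extension works precisely because $S_g(f)$ depends on $g$ only in $J_-(\supp f)$ (up to a unitary conjugation depending on $g$ outside $J_+(\supp f)$), so one can replace $g$ by a compactly supported $g'$ agreeing with it there. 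There is no analogous statement for $S(g)$ alone, and you cannot take $f=0$ to get one: $S_g(0)=1$ trivially, which gives no information about $S(g)$. So ``the extended sense used in Section 5'' does not cover $S(h\chi)$.

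The paper's proof is exactly the rigorous version of your computation: it introduces a spatially and temporally \emph{compactly} supported cutoff $h(\chi-\chi_u)$ for $u$ large enough, uses the established independence properties of relative S-matrices to replace $h\chi$ by $h(\chi-\chi_u)$ in every relative S-matrix occurring, and only then expands into products of honest unitaries $S(h(\chi-\chi_u))^{-1}S(h(\chi_t-\chi_u))\,S(h(\chi_t-\chi_u))^{-1}S(h(\chi_{t+s}-\chi_u))$, telescopes, recombines into a relative S-matrix, and finally undoes the $\chi_u$-cutoff. To repair your argument you would need this intermediate regularization; once it is inserted, your computation becomes the paper's.
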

\begin{proof}
For sufficiently large $u$ (depending on $s,t$) we have
\begin{eqnarray*}
&S_{h\chi}(h(\chi_t-\chi))\al_t(S_{h\chi}(h(\chi_s-\chi))=\\
&S_{h(\chi-\chi_u)}(h(\chi_t-\chi))\al_t(S_{h(\chi-\chi_{u-t})}(h(\chi_s-\chi))=\\
&S_{h(\chi-\chi_u)}(h(\chi_t-\chi))S_{h(\chi_t-\chi_u)}(h(\chi_{t+s}-\chi_t))=\\
&S(h(\chi-\chi_u))^{-1}S(h(\chi_t-\chi_u))S(h(\chi_t-\chi_u))^{-1}S(h(\chi_{t+s}-\chi_u))\\
&=S_{h(\chi-\chi_u)}(h(\chi_{t+s}-\chi))=S_{h\chi}(h(\chi_{t+s}-\chi))\ .
\end{eqnarray*}
\qed
\end{proof}
We conclude that the unitary cocycle $U_t^{h\chi}$ describes the interacting time evolution (with spatial cutoff $h$) in the interaction picture. Due to the finite speed of propagation, it coincides with the full time evolution for small $t$. 

We now consider a time translation covariant representation $(\cH,\pi,U_0)$ and assume that the map $\cD\ni f\to \pi(S(f))$ is strongly continuous. Then the cocycle $U_t^{h\chi}$ is strongly continuous, and 
\begin{equation}
{t\mapsto U_{h\chi}(t)=U_t^{h\chi}U_0(t)}
\end{equation}
is a strongly continuous 1-parameter group with selfadjoint generater {$H_{h\chi}$} which describes the dynamics of the interacting system with spatial cutoff.

In case $\pi$ is irreducible, one may now determine the spectrum of $H_{h\chi}$  and interpret it as the energy spectrum of the interacting theory with spatial cutoff (up to an additive constant).
One may also look for a ground state and consider the limit of removal of the cutoff.

If $\pi$ is a representation induced by a KMS state, and $\Omega_0$ is the corresponding cyclic vector in the representation space, one knows by Connes' cocycle theorem that there exists a weight whose modular automorphims are the time translations of the interacting theory.  If $\Omega_0$ is in the domain of $e^{-\frac{\beta}{2}H_{h\chi}}$, then this weight is bounded and induced by the vector
\begin{equation}
\Omega_{h\chi}=e^{-\frac{\beta}{2}H_{h\chi}}\Omega_0  \ .
\end{equation}

If the cocycle is strongly differentiable on a dense domain, the interaction Hamiltonian can be defined as the generator of the cocycle.
We obtain \cite{FL14}
 \begin{equation}\nonumber
H_{I}^{h\chi}=\hbar\frac{d}{idt}U_t^{h\chi}=\hbar\, S(h\chi)^{-1}\frac{d}{idt}S(h\chi_t)=\!R_{V(h\chi)}(V(h\dot{\chi}))\,,
\end{equation}
where in the last step we have used Bogoliubov's formula \eqref{RV} for interacting fields. In the limit $\ep\to0$, $\dot{\chi}$ tends to the $\de$-function and we obtain the usual interaction picture.

We illustrate the method on the example of an interaction with external sources.
We start with the CCR algebra $\tilde{\fA}_{S_0}$ of the free scalar field introduced at the end of section \ref{Def:quant}. The
formal S-matrix is
\begin{equation}\nonumber
S(f)=\Scal(F_f)=e^{iF_f/\hbar}e^{-\frac{i}{2\hbar}\langle f,\Delta^D f\rangle}\,,
\end{equation}
where $F_f(\ph)=\int \ph f d\mu$. One can verify it by direct computation (using the forumlas for time-ordered product given in section \ref{explicit: constr}) or, indirectly, by the verification of the causal factorization property \eqref{causalfact}. Namely, we have
\begin{equation}
S(f+g)^{-1}S(f+g+h)=
\end{equation}
\begin{equation}\nonumber
e^{iF_{h}/\hbar}\exp\frac{i}{\hbar}(\langle f+g,\Delta^D (f+g)\rangle-\langle f+g+h,\Delta^D(f+g+h)\rangle+\langle f+g,\Delta h\rangle)
\end{equation}
$$=e^{i\ph(h)}\exp\frac{i}{2\hbar}(-\langle h,\Delta^D h\rangle-\langle f+g,(\underbrace{2\Delta^D-\Delta}_{=\Delta^A}) h\rangle) \ ,$$
$$=S(g)^{-1} S(g+h)e^{\frac{i}{2\hbar}\langle f,\Delta^A h\rangle}$$
hence if $\supp f\cap J_-(\supp h)=\varnothing$ then by the support property of the advanced propagator
$$\langle f,\Delta^A h\rangle=0$$
and the factorization holds.

We find the interaction Hamiltonian ($h$ time independent)
\begin{equation}\nonumber
H^{h\chi}_{I}=-\ph(h\dot{\chi})-\mathrm{const}\ .
\end{equation}
Due to the smearing in time, this operator remains meaningful also for a pointlike source ($h\sim\delta(\x)$).

In general, for the free theory we obtain the usual Fock space Hamiltonian $H_0$, and the Hamiltonian of the  interacting theory with spatial cutoff is the sum of the free Hamiltonian and the interaction term,
\begin{equation}
H=H_0+\int h(x)\cH_I({\bf x}) d^3\x \ .
\end{equation}
In this framework, one can now apply the standard perturbative constructions of ground states and KMS states. In \cite{Lindner13,FL14} it was shown that in massive theories ground states  and KMS states for positive temperatures exist. Some aspects of this formalism involving thermal mass were further developped in \cite{DHP} in conjunction with the principle of perturbative agreement \cite{HW05}. It is hoped that this regularized Hamiltonian picture will allow to close the conceptual gap between the standard formalism in nonrelativistic quantum mechanics and quantum statistical mechanics and the formalism of relativistic QFT. 
\section{Conclusions}
We have seen that the concepts of AQFT can be used in renormalized perturbative QFT and yield Haag-Kastler nets (in the sense of algebras of formal power series) for generic models of QFT. Due to its axiomatic formulation all possible renormalization methods are covered, and one has an a priori characterization of the class of renormalized theories associated to a classical Lagrangian, independent of any regularization scheme. For practical purposes, it is nevertheless often appropriate to introduce a regularization, and in particular analytic regularization schemes such as dimensional or analytic renormalization are useful, for computation but also for specifying a theory in its class (e.g. by minimal subtraction), see \cite{DFKR14}. One may also incorporate the ideas of the renormalization flow equation in the sense of Polchinski and made rigorous in \cite{KKS92} . This is exposed in \cite{BDF09}. In these notes we restricted ourselves to scalar field theories. The generalization to other types of field theory have been discussed in several papers; fermionic theories can be treated essentially in the same way, and gauge theories can be treated after adding auxiliary fields (ghosts etc.) and constructing the time ordered products such that BRST symmetry is respected \cite{DF99,Hollands07,FR13}. Even gravity can be included where however the concept of local algebras of observables has to be properly adapted \cite{BFR13}.

\end{document}